\definecolor{orange}{RGB}{230,159,0}
\definecolor{blue}{RGB}{0,114,178}
\definecolor{red}{RGB}{213,94,0}
\definecolor{trajgreen}{RGB}{0,158,115}
\tikzset{
  traj0/.style={orange, line width=1.0pt},
  traj1/.style={blue, line width=1.5pt, dash pattern={on 7pt off 3pt}},
  traj2/.style={red, line width=2.0pt, dash pattern={on 3pt off 2pt on 1pt off 2pt}},
  traj3/.style={trajgreen, line width=2.5pt, dash pattern={on 2pt off 2pt}},
  hyp0/.style={traj0, opacity=0.45},
  hyp1/.style={traj1, opacity=0.45},
  hyp2/.style={traj2, opacity=0.45},
  hyp3/.style={traj3, opacity=0.45},
  hyp2b/.style={traj2, opacity=0.2},
}
\renewcommand{\thefootnote}{\fnsymbol{footnote}}
\renewcommand{\title}[1]{\vspace{\fill}
\eject\addtolength{\baselineskip}{4pt}
{\bfseries\LARGE #1}\\[3mm]\addtolength{\baselineskip}{-4pt}}
\renewcommand{\author}[3]{\parbox[t]{75mm}
{\begin{center}{\scshape #1}\\[3mm] #2\\
 {\ttfamily #3} \end{center}}}
\newcommand\numberthis{\addtocounter{equation}{1}\tag{\theequation}}
\newtheorem{thm}{\bfseries Theorem}
\newtheorem{lem}[thm]{\bfseries Lemma} %% lemmas, props, cor, etc
\newtheorem{remark}[thm]{\bfseries Remark} %% are numbered consecutively
\newtheorem{prop}[thm]{\bfseries Proposition} %%
\newtheorem{cor}[thm]{\bfseries Corollary}
\newtheorem{defn}[thm]{\bfseries Definition}
\newtheorem{cl}[thm]{\bfseries Fact}
\numberwithin{thm}{section}
\renewcommand{\P}{\mathbb{P}}
\newcommand{\e}{\mathrm{e}}
\newcommand{\R}{\mathbb{R}}
\newcommand{\E}{\mathbb{E}}
\renewcommand{\P}{\mathbb{P}}
\newcommand{\N}{\mathbb{N}}
\newcommand{\smfrac}[2]{\textstyle{\frac{#1}{#2}}}
\begin{document}

\begin{center}

\title{Resident fitness computation in linear time and \\ other algorithmic aspects of interacting trajectories\footnote{A shorter preliminary version~\cite{FNT25} of this paper appeared in the Proceedings of the 13th Hungarian--Japanese Symposium on Discrete Mathematics and Its Applications.}}
\author{Katalin Friedl\footnotemark[0]
}{
Department of Computer Science and Information Theory \\
Budapest University of Technology and Economics \\
Műegyetem rkp. 3., H-1111 Budapest, Hungary
}{
friedl@cs.bme.hu
}
\author{Viktória Nemkin\footnotemark[2]
}{
Department of Computer Science and Information Theory \\
Budapest University of Technology and Economics \\
Műegyetem rkp. 3., H-1111 Budapest, Hungary
}{
nemkin@cs.bme.hu
}

\author{
\underline{András Tóbiás}\footnotemark[4]
}{
Department of Computer Science and Information Theory \\
Budapest University of Technology and Economics \\
Műegyetem rkp. 3., H-1111 Budapest, Hungary; \\
HUN-REN Alfréd Rényi Institute of Mathematics \\
Reáltanoda utca 13--15, H-1053 Budapest, Hungary
}{
tobias@cs.bme.hu
}
\footnotetext[2]{This paper was supported by the Doctoral Excellence Fellowship Programme (DCEP), funded by the National Research, Development and Innovation Fund of the Ministry of Culture and Innovation and the Budapest University of Technology and Economics.}
\footnotetext[4]{This paper was supported by the János Bolyai Research Scholarship of the Hungarian Academy of Sciences. Project no.\ STARTING 149835 has been implemented with the support provided by the Ministry of Culture and Innovation of Hungary from the National Research, Development and Innovation Fund, financed under the STARTING\_24 funding scheme.}

\end{center}

\begin{quote}
{\bfseries Abstract:}
Systems of interacting trajectories were recently studied in~\cite{HGSTW24}.
Such a system of $[0,1]$-valued piecewise linear trajectories arises as a scaling limit of the system
of logarithmic subpopulation sizes in a population-genetic model (more precisely, a
Moran model) with mutation and selection. By definition, the resident fitness is initially 0
and afterwards it increases by the ultimate slope of each trajectory that reaches height 1.

We show that although the interaction of $n$ trajectories may yield $\Omega(n^2)$ slope changes in
total, the resident fitness function can be computed algorithmically in $O(n)$ time. Our
algorithm uses the so-called continued lines representation of the system of
interacting trajectories. In the special case of Poissonian interacting trajectories where the
birth times of the trajectories form a Poisson process and the initial slopes are random and
i.i.d., we provide a linear bound on the expected total number of slope changes.

\end{quote}

\begin{quote}
{\bf Keywords: } (Poissonian) interacting trajectories, continued lines representation, algorithmic construction, resident fitness, speed of adaptation, Gerrish--Lenski regime.
\end{quote}
\vspace{5mm}

\renewcommand{\thefootnote}{\arabic{footnote}}

\section{Introduction}\label{sec-intro}

The concept of a system of interacting trajectories has been recently introduced in~\cite{HGSTW24}. Such a system consists of continuous and piecewise linear trajectories in a 2D coordinate system, where the $x$ axis represents time and the $y$ axis represents a magnitude that corresponds to the size of subpopulations of a large population of constant total size on a logarithmic scale. The realization of the trajectories is in $[0,1]$ at each time. Each trajectory has a birth time and an initial slope. Until the birth time, the trajectory is constant zero, then it starts growing linearly with its initial slope. When a trajectory reaches height 1, it becomes constant 1, while the slope of all other trajectories currently at a positive height is reduced by the ultimate slope of this trajectory before reaching height 1. The slope of each trajectory is nonincreasing, and trajectories that reach height 0 again stay constant 0 forever. This interactive dynamics models the exponential growth/decay of mutant subpopulations in a certain stochastic population-genetic model, more precisely a Moran model, with mutation and selection, under a logarithmic scaling. Here, selection is strong and the mutation rate is in the so-called Gerrish--Lenski regime (dating back to~\cite{GL98}), see Section~\ref{rem-modsel}.

A special case is the \emph{system of Poissonian interacting trajectories} (PIT), where birth times are random, forming a Poisson process, and the positive initial slopes are random and i.i.d.\ (i.e.\ independent and identically distributed) and independent of the birth times. Under a suitable scaling of parameters, the family of logarithmic subpopulation sizes indeed converges to the PIT in distribution (in a suitable space with a suitable topology, see~\cite[Theorem 2.7]{HGSTW24}). In the PIT, the \emph{resident fitness} $F(t)$ is defined as the sum of the ultimate slope of trajectories reaching height 1 up to time $t$. If the distribution of initial slopes has a finite first moment, $F(t)/t$ converges almost surely to a finite deterministic number $\overline{v}$ called the \emph{speed of adaptation}, and it tends to infinity otherwise (cf.\ \cite[Theorem 2.8]{HGSTW24}). Moreover, if the initial slopes have a finite second moment, then the resident fitness process satisfies a functional central limit theorem (see~\cite[Theorem 2.12]{HGSTW24}). The proofs of the latter two assertions are based on a renewal argument.

A closed formula for $\overline{v}$ is only known in the case when the initial slopes are deterministic and constant. In that case, the speed was already computed in~\cite{BGPW19}; this result was put into the context of the PIT by~\cite[Proposition 2.10]{HGSTW24}. On the other hand, one can clearly see the computation of the realization of $F(t)/t$ for a large $t$ as a Monte Carlo simulation for the value of $\overline{v}$. In the case when the fitness advantages have finite variance, we also know from the functional central limit theorem that typical fluctuations of $F(t)/t$ around $\overline{v}$ are of order $1/\sqrt t$. Thus, in order to approximate $\overline{v}$, it is useful to find an efficient algorithm for the simulation of the resident fitness of (Poissonian) interacting trajectories.

The main result of the present paper, Theorem~\ref{theorem-algorithm}, states that there is a deterministic algorithm that computes the resident fitness for all times in the case of a system of $n$ interacting trajectories (with deterministic birth times and initial slopes, where the birth times have been sorted in advance) in $O(n)$ time. The existence of such a linear-time algorithm is \emph{a priori} not obvious because as we will show in Proposition~\ref{prop-cousinen} below, the interactions between the trajectories may yield $\Omega(n^2)$ slope changes in total. Our general conclusion is that computing the resident fitness at all times is in general substantially easier than determining the slopes of all piecewise linear trajectories in all pieces. Our algorithm is described in terms of the so-called \emph{continued lines representation}, a process introduced in~\cite{HGSTW24} that is in one-to-one correspondence with the PIT but consists of {half-lines} instead of piecewise linear broken lines. As Corollary~\ref{cor-k} asserts, in case all initial slopes are in $\{ 1,2,\ldots,k\}$ for some $k \in \mathbb{N}$, the piecewise constant slope functions of all trajectories can be computed in $O(kn)$ time based on our main algorithm.

Moreover, our Proposition~\ref{lemma-average} below states that in the case of the \emph{Poissonian} interacting trajectories, the expected number of slope changes up to time $t$ is $O(t)$ as $t \to \infty$. The proof of the latter assertion is based on a modified version of the aforementioned renewal argument of~\cite{HGSTW24} combined with classical large-deviation estimates. This proof clearly shows that although the worst-case interaction between $n$ trajectories may provide $\Omega(n^2)$ slope changes in total, scenarios where there are more than a sufficiently large constant times $t$ slope changes up to time $t$ are exponentially unlikely in the Poissonian case. Based on this proof and additional arguments from~\cite{HGSTW24}, we derive Corollary~\ref{cor-speedofkinking}, a strong law of large numbers for the number of slope changes up to time $t$ in the large-$t$ limit.

Let us note that apart from the PIT, there are many other recent examples of piecewise linear limiting processes arising in stochastic population-genetic~\cite{DM11} and population-dynamic~\cite{BPT23,BCS19,B24,CKS21,CMT21,EK23,EK21,P23} models under a logarithmic scaling. These processes originate from models with higher mutation rates than the one of~\cite{HGSTW24}. The limiting processes are typically deterministic and their dynamics is also a bit different from the one of the interacting trajectories, thanks to multiple mutations between two given types of individuals and also due to other biological phenomena depending on the model (e.g.\ horizontal gene transfer, asymmetric competition, dormancy etc.). It is possible that variants or extensions of our algorithm can be used for simulating these limiting processes efficiently; we defer such questions to future work.

The rest of this paper is organized as follows. In Section~\ref{sec-model} we recall the definition of a system of (Poissonian) interacting trajectories from~\cite{HGSTW24} and we provide a short summary of its biological motivation. In Section~\ref{sec-results} we present our results and in Section~\ref{sec-proofs} we prove them. Finally, in Section~\ref{sec-discussion} we briefly discuss some consequences of our assertions and proofs. More concretely, in Section~\ref{rem-fix} we explain some applications of our algorithm in determining ancestral relations and fixation events of trajectories. In Section~\ref{rem-givent} we discuss how to determine height functions of the trajectories and fixed-time values of the resident fitness efficiently using our algorithm and data structure, and in Section~\ref{sec-unordered} we comment on the case when the input of the algorithm is initially not sorted by the birth times of trajectories. Finally, in Section~\ref{sec-b} we explain that our algorithm extends to a certain variant of the system of interacting trajectories exhibiting jumps, and in Section~\ref{rem-modsel} we mention that this variant has been conjectured to arise as a scaling limit of a Moran model in a parameter regime where biological selection is weaker than in our original model, which is called the case of moderate selection.

\section{Setting}\label{sec-model}
The definition of a system of (Poissonian) interacting trajectories that we will use in the present paper is a somewhat simplified version of the definition appearing in~\cite[Sections 2.1 and 3.1]{HGSTW24} (see Remark~\ref{rem-defchanges} below for the differences between the two definitions). It is given as follows.

Writing $\N=\{1,2,\ldots\}$ and $\N_0=\{0,1,2,\ldots\}$, for $n \in \N_0$ we define $[n]=\{ 1,2,\ldots,n\}$, moreover we put $[\infty]=\N$. We denote the space of continuous and piecewise linear trajectories $h$ from $[0,\infty)$ to $[0,1]$ such that $h(t) \in [0,1]$ for all $t \geq 0$ by $\mathcal C_{\rm PL}$. Each $h\in \mathcal C_{\rm PL} $ has at time $t$ a {\em height} $h(t)$ and a (right) {\em slope}
\begin{equation}\label{defvh}
   v_{h}(t):= \lim\limits_{\delta \downarrow 0} \tfrac 1\delta(h(t+\delta)-h(t)).
\end{equation}

Assume that for some $n\in \N_0 \cup \{ \infty \}$ we are given a configuration of pairs $(t_i,a_i)$ consisting of a \emph{birth time} $t_i$ and an \emph{initial slope} $a_i$ \begin{equation}\label{defbeth}
   \beth=((t_i,a_i))_{1\le i < n+1}\in ([0,\infty) \times (0,\infty))^{[n]},
\end{equation} with $0 \leq t_1< t_2\cdots<t_n$ if $n < \infty$ and with $0 \leq t_1<t_2<\ldots$ and $\lim_{n\to\infty} t_n=\infty$ if $n=\infty$ (where we use the convention $\infty+1=\infty$). $\beth$ (beth) specifies that the trajectory $h_i$, $1\le i < n+1$, has height $0$ for $t \in [0,t_i]$ and right slope $a_i$ at its {\em birth time}~$t_i$.

\begin{defn}[System of interacting trajectories,~\cite{HGSTW24}]\label{defdyn}
For $\beth$ as in~\eqref{defbeth}, let
\begin{equation}\label{defsystH}
  \mathbb{H}
    = (h_i)_{0 \leq i < n+1}
    \in \big(\mathcal C_{\rm PL}\big)^{[n] \cup \{ 0\}}
\end{equation}
result from the following deterministic interactive dynamics on $(\mathcal C_{\rm PL})^{[n] \cup \{0\}}$, where we write $v_i:=v_{h_i}$ for $0\leq i <n+1$ in order to simplify notation:
    \begin{itemize}
        \item One initial trajectory $h_0 \in \mathcal C_{\rm PL}$ starts at height $h_0(0) = 1$ with slope $v_{0}(0)=0$. All other trajectories $h_i$, $1 \leq i < n+1$ start at height $h_i(0)=0$ with slope $v_i(0)=0$.
        \item Trajectories continue with constant slope until the next birth time is reached or one of the trajectories reaches either 1 from below or 0 from above.
        \item For $i \geq 1$, at the birth time $t_i$ the slope $v_i$ of trajectory $h_i$ jumps from $0$ to $a_i$.
        \item Whenever at some time $t$ at least one trajectory reaches height $1$ from below, the slopes of all trajectories whose height is in $(0,1]$ at time $t$ are simultaneously reduced by
        \[
          v^\ast
           := \max\limits\{ v_i(t-)\mid 0 \leq i < n+1 \mbox{ such that } h_i(t) = 1\},
        \]
        i.e.\ for all $0 \leq i < n+1$ with $h_i(t) > 0$ \begin{eqnarray}\label{newslope} v_i(t) := v_i(t-) - v^\ast.\end{eqnarray}
        \item Whenever a trajectory at some time $t$ reaches height $0$ from above, its slope is instantly set to $0$, and this trajectory then stays at height $0$ forever. \\ Let $e_i$ denote the time when this happens to the $i$-th trajectory during the sequential application of the heuristics (where we set $e_i=\infty$ if it never happens) where $0 \leq i < n+1$. For $t \geq 0$ we say that the $i$-th trajectory is \emph{alive} at time $t$ if $t_i \leq t < e_i$.
    \end{itemize}
\end{defn}
 We call this $\mathbb{H}$ {\em the system of interacting trajectories initiated by $\beth$}, and denote it by $\mathbb{H}(\beth)$.
An alternative, equivalent representation of the system of interacting trajectories is the so-called \emph{continued lines representation}, which plays an important role in the proof of Theorem~\ref{theorem-algorithm} and will be introduced in Section~\ref{rem-contlines} below. Instead of broken lines, this representation consists of nondecreasing half-lines starting at the birth times of the corresponding trajectories, which makes it more amenable for algorithmic investigations.

\begin{defn}[Resident change times, resident type, and resident fitness,~\cite{HGSTW24}]\label{defrch} Let $\mathbb{H}$ be as in~\eqref{defsystH}, following the dynamics specified in Definition~\ref{defdyn}.
\begin{itemize}
    \item The times at which one of the trajectories $h_i$, $i > 0$, reaches height~$1$ from below will be called the {\em resident change times}.
    \item For $t > 0$ we call
\begin{equation}\label{defrestype}
    \varrho(t) = \varrho^{(\beth)}(t)
     :=\operatorname*{arg\,max}_{i} \{ v_i(t-) \mid 0 \leq i < n+1, h_i(t)=1 \},
\end{equation}
the {\em resident type} at time $t$, and we put $\varrho(0)=\varrho^{(\beth)}(0)=0$.
\item With the definition $f(0)=0$, we define the {\em resident fitness} $f(t)$, $t\ge 0$, by decreeing that $f$ at any resident change time~$r$ has an upward jump with
\begin{equation}\label{finc}
    f(r)-f(r-) = \max\{v_i(r-)\mid 0 \leq i < n+1,\, h_i(r) =1\}
    \end{equation}
\end{itemize}
and remains constant between any two subsequent resident change times.
\end{defn}

\begin{figure}
\centering
\scalebox{0.85}{
\begin{tikzpicture}
  \draw[scale=3.7, black, ->] (0.9, 0) -- (3.9, 0) node[right] {time};
  \draw[scale=3.7, black, ->] (0.9, -0.05) -- (0.9, 1.08) node[above] {height};
  \draw[scale=3.7,black,thick] (0.87,1) node[left] {$1$};
    \draw[scale=3.7,black,thick] (0.87,0) node[left] {$0$};
  \draw[scale=3.7, blue,thick] (1,-0.05) node[below] {$ t_1=1$};
  \draw[scale=3.7, blue,thick] (1.45,0.5) node[above] {$a_1$};
  \draw[scale=3.7, orange,thick] (2.45,0.6) node[above] {$-a_1$};
  \draw[scale=3.7, red,thick] (1.7,-0.05) node[below] {$ t_2=1.8$};
   \draw[scale=3.7, red,thick] (1.8,0.1) node[above] {$a_2$};
    \draw[scale=3.7, red,thick] (2.24,0.22) node[above] {$a_2-a_1$};
       \draw[scale=3.7, trajgreen, line width=1.0pt] (2.34,0.1) node[above] {$a_3$};
        \draw[scale=3.7, blue,thick] (3.28,0.75) node[above] {$-a_3$};
           \draw[scale=3.7, red,thick] (3.88,0.25) node[above] {$a_2-a_1-a_3$};
  \draw[scale=3.7, blue,thick] (2.05,-0.03) node[below] {$ t_1+\smfrac{1}{a_1}$};
    \draw[scale=3.7, trajgreen, line width=1.0pt] (2.41,-0.05) node[below] {$ t_3=2.35$};
      \draw[scale=3.7, trajgreen, line width=1.0pt] (3.1,-0.03) node[below] {$t_3+\smfrac{1}{a_3}$};
      \draw[scale=3.7, red,thick] (3.82,-0.03) node[below] {$t'$};
      \draw[scale=3.7, domain=0.9:2, smooth, variable=\x, traj0] plot ({\x}, {1});
  \draw[scale=3.7, domain=1:2, smooth, variable=\x, traj1] plot ({\x}, {\x-1});
  \draw[scale=3.7, orange,thick] (1.55,1.15) node[below] {$0$};
  \draw[scale=3.7, blue,thick] (2.55,1.15) node[below] {$a_1=1$};
  \draw[scale=3.7, trajgreen, line width=1.0pt] (3.55,1.15) node[below] {$a_1+a_3=2.5$};
  \draw[scale=3.7, domain=1.8:2, smooth, variable=\x, traj2] plot ({\x}, {1.5*(\x-1.8)});
  \draw[scale=3.7, domain=2:3, smooth, variable=\x, traj0] plot ({\x}, {-1*(\x-2)+1});
   \draw[scale=3.7, domain=2:3.016667, smooth, variable=\x, traj2] plot ({\x}, {0.5*(\x-2)+0.3});
   \draw[scale=3.7, domain=2:3.016667, smooth, variable=\x, traj1] plot ({\x}, {1});
   \draw[scale=3.7, domain=2.35:3.016667, smooth, variable=\x, traj3] plot ({\x}, {1.5*(\x-2.35)});
  \draw[scale=3.7, domain=3.016667:3.68, smooth, variable=\x, traj1] plot ({\x}, {-1.5*(\x-3.016667)+1});
  \draw[scale=3.7, domain=3.01667:3.9, smooth, variable=\x, traj3] plot ({\x}, {1});
  \draw[scale=3.7, domain=3.01667:3.825, smooth, variable=\x, traj2] plot ({\x}, {-1*(\x-3.016667)+0.808335});
\end{tikzpicture}
}

\scalebox{0.85}{
\begin{tikzpicture}
  \draw[scale=3.7, black, ->] (0.9, 0) -- (3.95, 0) node[right] {time};
  \draw[scale=3.7, black, ->] (0.9, -0.0125) -- (0.9, 1.075) node[above] {height};
  \draw[scale=3.7,black,thick] (0.87,0.25) node[left] {$1$};
    \draw[scale=3.7,black,thick] (0.87,0) node[left] {$0$};
  \draw[scale=3.7, blue,thick] (1,-0.0125) node[below] {$ t_1=1$};
  \draw[scale=3.7, blue,thick] (2.25,0.35) node[above] {$a_1=1$};
  \draw[scale=3.7, red,thick] (1.7,-0.0125) node[below] {$ t_2=1.8$};
   \draw[scale=3.7, red,thick] (1.8,0.06125) node[above] {$a_2$};
  \draw[scale=3.7, blue,thick] (2.05,-0.00725) node[below] {$ t_1+\smfrac{1}{a_1}$};
    \draw[scale=3.7, trajgreen, line width=1.0pt] (2.41,-0.0125) node[below] {$ t_3=2.35$};
      \draw[scale=3.7, trajgreen, line width=1.0pt] (3.1,-0.0075) node[below] {$t_3+\smfrac{1}{a_3}$};
      \draw[scale=3.7, domain=0.9:3.95, smooth, variable=\x, traj0] plot ({\x}, {0.25});
  \draw[scale=3.7, domain=1:3.95, smooth, variable=\x, traj1] plot ({\x}, {0.25*(\x-1)});
  \draw[scale=3.7, orange,thick] (1.5,0.375) node[below] {$0$};
  \draw[scale=3.7, trajgreen, line width=1.0pt] (2.95,0.75) node[below] {$a_1+a_3=2.5$};
  \draw[scale=3.7, domain=1.8:3.95, smooth, variable=\x, traj2] plot ({\x}, {0.375*(\x-1.8)});
   \draw[scale=3.7, domain=2.35:3.95, smooth, variable=\x, traj3] plot ({\x}, {0.625*(\x-2.35)+0.0875});
\end{tikzpicture}
}

\caption{Top: This is an example of a system of interacting trajectories with $n=3$, $(t_1,t_2,t_3)=(1,1.8,2.35)$ and $(a_1,a_2,a_3)=(1,1.5,1.5)$. We see that except the yellow (smooth) $0$-th trajectory born at height 1 at time $0$ with initial slope $0$, each trajectory $h_i$ is born at time $t_i$ with initial slope $a_i$ and that at each resident change, the slope of each trajectory drops by the ultimate slope of the new resident before the resident change. In particular, the orange (dash-dotted) trajectory $h_2$ never becomes resident; at time $t'$ it becomes extinct and the green (dotted) trajectory remains the only trajectory at positive height. The numbers between heights $0$ and $1$ describe the piecewise values of the (right) slopes $v_i(t)$. The numbers above height 1 correspond to the piecewise constant values of the resident fitness $t \mapsto f(t)$ in the given time intervals. Until the blue (dashed) trajectory $h_1$ reaches height $1$ at time $t_1+\frac{1}{a_1}$, the resident fitness equals its initial value $0$. Then it takes two consecutive jumps whose sizes equal the ultimate slopes of the two new residents, respectively: First at time $t_1+\frac{1}{a_1}$ when the blue (dashed) trajectory $h_1$ becomes resident, it increases by $1$ (and $\varrho(t)$ changes from $0$ to $1$), and then at time $t_3+\frac{1}{a_3}$ when the green (dotted) trajectory becomes resident, it increases by $1.5$ (and $\varrho(t)$ changes to $2.5$). \\
Bottom: The corresponding continued lines representation (see Section~\ref{rem-contlines}). The slope of each half-line is presented next to the half-line with the same colour and line style.}\label{fig-kick}
\end{figure}

See the top image of Figure~\ref{fig-kick} for an illustration of Definitions~\ref{defdyn} and~\ref{defrch}.
The following remark is an adaptation of~\cite[Equations (3.5) and (3.6)]{HGSTW24} to our situation.
\begin{remark}[\cite{HGSTW24}]\label{remresi}
\begin{enumerate}
\item The resident fitness $f$ obeys
      \begin{equation}\label{falternative}
        f(t)
         = a_{\varrho(t)} + f(t_{\varrho(t)}), \quad t \ge 0
      \end{equation}
      with $a_0:=0$.
\item
  If $t\ge 0$ is not a resident change time, then $\varrho(t)$ is the {\em unique} $i > 0$ for which $h_i(t) =1$. If $t$ is a resident change time and $h_i$ is the only trajectory that reaches height 1 from below at time $t$, then $\varrho(t)=i$. More generally, $\varrho(t)$ in~\eqref{defrestype} is the fittest of all types that are at height $1$ at time~$t$.

\end{enumerate}
\end{remark}

An important special case is when $\beth$ is random, $n=\infty$, and
\begin{itemize}
    \item the role of $(t_i)_{i \in \N}$ is played by the sequence of arrival times $(T_i)_{i\in \N}$ of a Poisson process $(T_i)_{i\in \N}$ with intensity $\lambda>0$ (i.e.\ putting $T_0=0$, $T_i-T_{i-1}$, $i=1,2,\ldots$ are i.i.d.\ exponentially distributed random variables with parameter $\lambda$),
    \item whereas the role of $(a_i)_{i\in \N}$ is played by a sequence $(A_i)_{i\in \N}$ of i.i.d.\ strictly positive random variables that is independent of the Poisson process $(T_i)_{i\in \N}$, where the common distribution of each $A_i$ on $(0,\infty)$ is denoted by $\gamma$ (i.e.\ $\gamma$ is a Borel probability measure on $(0,\infty)$ such that for any $x\in \mathbb{R}$, $\gamma((-\infty,x])=\P(A_1 \leq x)$).
    \end{itemize}
    In this case, we call $\mathbb{H}(\beth)$ the \emph{system of Poissonian interacting trajectories} (or \emph{PIT} for short) with parameters $(\lambda,\gamma)$, or briefly as the $\mathrm{PIT}(\lambda,\gamma)$. This system arises naturally as the large population scaling limit of Moran models with recurrent beneficial mutations, see~\cite[Theorem 2.7]{HGSTW24}. The corresponding trajectories and their (right) slopes will also be denoted by capital letters $H_i$ resp.\ $V_i$ (note that they are deterministic functions of the random input $((T_j,A_j))_{j\in\mathbb{N}}$). The resident fitness $F$ that we mentioned in the introduction corresponds to $f$ from Definition~\ref{defrch} in the case of the PIT. The resident type at time $t \geq 0$ in this context will still be denoted by $\varrho(t)$. We will generally exclude initial conditions that are degenerate according to the following definition.
\begin{defn}
Let $\beth$ be as in~\eqref{defbeth}. We say that $\beth$ is \emph{degenerate} if the corresponding $(h_i)_{i\in [n] \cup \{ 0 \}}$ defined according to Definition~\ref{defdyn} satisfies at least one of the following conditions:
\begin{itemize}
    \item There exists $t >0$ and three pairwise distinct indices $i,j,k \in [n] \cup \{0\}$ such that $h_i(t)=h_j(t)=h_k(t)=1$.
    \item There exists $j \in [n] \cup \{ 0 \}$ such that the extinction time of $h_j$, i.e.\ $\inf \{ t > t_j \mid h_j(t)=0\}$, is a resident change time or equals $t_i$ for some $i \in [n]$.
    \item There exists $i \in [n]$ such that $t_i$ is a resident change time.
\end{itemize}
\end{defn}
Elementary properties of the Poisson process imply the following fact (see also the proof of~\cite[Lemma 2.3]{HGSTW24}).
\begin{cl}
Let $\beth=((T_i,A_i))_{i\in \N}$ as above. Then we have that
$ \P ( \beth \text{ is nondegenerate} ) = 1. $
\end{cl}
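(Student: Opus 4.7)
The plan is to prove that each of the three conditions in the definition of degeneracy fails almost surely, and then to take a union bound. The central tool is the following absolute continuity property of the Poisson process: conditional on the $\sigma$-algebra $\mathcal G_n := \sigma((T_j)_{1\le j\le n-1}, (A_j)_{j\in \N})$, the law of $T_n$ is that of $T_{n-1}+\mathrm{Exp}(\lambda)$ and is in particular absolutely continuous. Consequently, any countable $\mathcal G_n$-measurable random subset of $\R$ is hit by $T_n$ with probability~$0$. Note also that although $\beth$ is infinite, every event time in $\mathbb H(\beth)$ (birth time, extinction time, or hitting time of~$1$) depends only on some finite initial segment $\beth_{(n)}:=((T_j, A_j))_{j=1}^{n}$. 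In particular, if $\beth$ is degenerate then $\beth_{(n)}$ is degenerate for all sufficiently large~$n$, so writing $D_n$ for the event that $\mathbb H(\beth_{(n)})$ is degenerate, it suffices to prove $\P(D_n)=0$ for every $n\in \N$.

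I would establish $\P(D_n)=0$ by induction on $n$, with the cases $n=0,1$ checked by hand. For the inductive step, I work on the almost sure event that $\beth_{(n-1)}$ is nondegenerate; then the dynamics of $\mathbb H(\beth_{(n-1)})$ is completely determined from $\mathcal G_n$ together with $A_{n-1}$, its event times forming a pairwise distinct finite $\mathcal G_n$-measurable set. A new degeneracy caused by $h_n$ must then arise either because $T_n$ coincides with one of these preexisting event times (handled immediately by the absolute continuity above), or because an event produced or modified by $h_n$ (its own hitting of~$1$ or~$0$, or an altered hitting event of some earlier trajectory) coincides with another event. In this second case, each coincidence equation relates two event times that are piecewise rational functions of $T_n$ once $\mathcal G_n$ is fixed; provided the two functions are not identically equal on the relevant piece, the equation has finitely many solutions there, and $T_n$ falls into this solution set with conditional probability~$0$.

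The main obstacle I anticipate is the case analysis required in the second part: for each combinatorial scenario (which trajectories are alive at each moment, in what order events occur, etc.) one must verify that the two sides of each potential coincidence equation are nontrivially different as functions of $T_n$. This is routine since for the first $n$ trajectories there are only finitely many such scenarios, and within each one the relevant linear/rational expressions in $T_n$ can be compared directly by exhibiting a single value of $T_n$ at which they differ. Taking a final union bound over $n\in \N$ and over the finitely many coincidence types contributing to each $D_n$ then yields $\P(\beth \text{ is degenerate})=0$.
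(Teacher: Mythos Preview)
The paper gives no proof of this claim beyond the single sentence ``Elementary properties of the Poisson process imply the following claim.'' Your proposal is a correct and careful unpacking of precisely those elementary properties --- absolute continuity of $T_n$ given the past, reduction to finitely many piecewise-rational coincidence equations for each finite truncation $\beth_{(n)}$, and a union bound --- so there is nothing in the paper to compare against beyond confirming that your argument is the intended one. The case analysis you flag as the main obstacle is indeed the only place real work is needed; a marginally cleaner packaging (avoiding the induction) would be to note that, conditionally on $(A_j)_{j\le n}$, the degenerate configurations form a finite union of proper algebraic hypersurfaces in $\{0<t_1<\cdots<t_n\}$ and hence a Lebesgue-null set, but your inductive version is equally valid.
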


\begin{remark}\label{rem-defchanges}
Apart from often considering only finitely many trajectories instead of countably many, the main changes in the above definitions compared to their analogues in~\cite{HGSTW24} are the following:
\begin{itemize}
    \item We ignore general initial conditions with multiple trajectories at positive heights at time 0, used in the proof of the
large population limit result. (This way, our $\mathbb{H}(\beth)$ corresponds to $\mathbb{H}(\aleph,\beth)$ in~\cite[Section 3.1]{HGSTW24} with the special choice $\aleph=(1,0)$.) For simplicity we allow $t_1=0$ (and not only $t_1 >0$).
    \item~\cite{HGSTW24} uses a parametrisation for the PIT which
is different from the one used in the present paper. Specifically, the
random input for the PIT in~\cite{HGSTW24} is a Poisson process with intensity
measure $\lambda^* \mathrm{d} t \cdot \gamma^*(\mathrm{d} a)$ (with $(\lambda^*, \gamma^*)$ specified in~\cite[Remark 2.2]{HGSTW24}), while
the input of the PIT in the present paper is a Poisson point process
with intensity measure $\lambda \mathrm{d} t \cdot \gamma(\mathrm{d} a)$. The reason is that the original definition of the PIT includes trajectories whose slope and height stays zero forever. These describe mutant families going extinct rapidly by chance shortly after the mutation. This possibility is biologically relevant but algorithmically uninteresting, and in the present paper we ignore it in order to ease notation.
\end{itemize}
\end{remark}

\subsection{The continued lines representation}\label{rem-contlines}
An alternative, equivalent representation of the PIT, called the \emph{continued lines representation}, was provided in~\cite[Section 5.3]{HGSTW24b}. This representation can be generated in a recursive way without any calculations involving already decaying height functions, which turns out to be useful for algorithmic investigations. We describe this representation for a general system of interacting trajectories as follows.

Given a non-degenerate input $\beth=(t_i, a_i)_{0 \leq i < n+1}$, construct recursively a sequence of straight {half-lines} $\ell_0, \ell_1, \dots$, where
\begin{itemize}
    \item $\ell_0$ starts from the point $(0, 1)$ (towards the right) with slope 0,
    \item $\ell_1$ starts from the point $(t_1, 0)$ with slope $a_1$,
    \item \dots and with $\ell^r_k :=$ the pointwise maximum of $\ell_0, \ell_1 \dots \ell_k$,
    \item $\ell_{k+1}$ starts from the point $(t_{k+1},\ell^r_k(t_{k+1})-1)$ with slope $(\ell^r_k)'(t_{k+1}) + a_{k+1}$,
\end{itemize}

etc. In this way, the sequence $\ell^r_{n}$, $n=0, 1, 2, \dots$ increases pointwise to a piecewise linear, convex function $\ell^r$ (the upper envelope of the sequence $(\ell_k)$). It follows from the construction that each of the linear pieces of $\ell^{r}$ are identical to a piece of $\ell_i$ for some (uniquely determined) $i$. See the left image of Figure~\ref{fig-contlines} for an illustration.

There is a bijection between the system of interacting trajectories and the continued lines representation with the same input. This is stated in the following lemma, whose proof is elementary and therefore omitted. See Figure~\ref{fig-kick} for an illustration.
\begin{lem}\label{lem-contlines}
For a non-degenerate input $\beth=(t_i, a_i)_{0 \leq i < n+1}$, with the above notation, trajectory $h_i$, $0 \leq i < n+1$, of the system $\mathbb{H}(\beth)$ of interacting trajectories is given as $h_i(s) = \ell_i(s)-\ell^r(s)+1$ for times $s$
between $t_i$ and $\min\{t > t_i : \ell_i(t) - \ell^r(t) + 1 = 0\}$, and as $h_i(s) = 0$ for
$s$ outside of this time interval.
This identifies the resident change times as the times where the slope of $\ell^r$ changes, as well as the resident fitness between two subsequent such times as $f(t) = (\ell^r)'(t)$.

Conversely, given $\mathbb{H}(\beth)$ with $f$ defined as in Definition~\ref{defrch}, {half-line} $\ell_i$ of the corresponding continued representation can be expressed as $\ell_i(t) = \int_0^{t_i} f(s) \mathrm{d} s + (f(t_i)+a_i)(t-t_i)$ for $t \geq t_i$.

In other words, {half-line} $\ell_i$ starts from height $\ell^r(t_i) = \int_0^{t_i} f(s) \mathrm{d} s$ with slope $f(t_i)+a_i$.
\end{lem}

In this case, for $k \geq 0$, if $j \in \{0,1,\ldots,k\}$ satisfies $\ell^r(t_{k+1})=\ell_j(t_{k+1})$, we say that $j$ is the \emph{parent} of $k+1$ (note that there is a unique such $j$ that is well-defined under the assumption of nondegeneracy) and $k+1$ is a \emph{child} of $j$.

\section{Results}\label{sec-results}

For a sequence $(\beth^{(n)})_{n\in\N}$ of nondegenerate initial data $\beth^{(n)}=((t_i^{(n)},a_i^{(n)}))_{1 \leq i < n+1}$, writing $\mathbb{H}(\beth^{(n)}) = (h_i^{(n)})_{0 \leq i < n+1}$, we denote by
\[ \begin{aligned} k_n(\beth^{(n)})&  = \sum_{t \geq 0 \colon t \text{ resident change time of } \mathbb{H}(\beth^{(n)})} \sum_{i=0}^n  \mathds 1_{\{ v_i^{(n)}(t-) \neq v_i^{(n)}(t+)\} \cap \{ h_i^{(n)}(t)>0 \}} \\ & = \sum_{ t \geq 0 \colon t \text{ resident change time of } \mathbb{H}(\beth^{(n)})} \big| \{ i \in \{ 0,1,\ldots,n\} \colon h_i \text{ is alive at time } t \} \big| \end{aligned}  \numberthis\label{kn} \]
the total number of \emph{kinks} (slope changes) due to resident changes in the system of interacting trajectories $\mathbb{H}(\beth^{(n)})$.
See, for example, the top image of Figure~\ref{fig-kick}, at time $t=t_3+\frac{1}{a_3}$, all trajectories alive (dotted green, dashed blue and dash-dotted orange) change slope. From this definition, we have the following result.
\begin{prop}\label{prop-cousinen}
There exists a sequence $(\beth^{(n)})_{n\in\mathbb{N}}$ as above such that $k_n(\beth^{(n)})=\Omega(n^2)$.
\end{prop}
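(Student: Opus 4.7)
The plan is to construct a nondegenerate initial configuration $\beth^{(n)}$ in which every trajectory becomes resident in its own turn, and every previously-born trajectory is still alive at every subsequent resident change. This would force $k$ alive trajectories at the $k$-th resident change time and hence yield
\begin{equation*}
k_n(\beth^{(n)})\ge\sum_{k=1}^n k=\tfrac{n(n+1)}{2}=\Omega(n^2).
\end{equation*}
Working in the continued lines representation of Remark~\ref{rem-contlines}, I would take geometrically growing initial slopes $a_i=n^i$ for $i=1,\ldots,n$, set $t_1=0$, and then recursively define $t_i=T_{i-1}+\mu$ for $i\ge 2$, where $T_{i-1}$ is the time at which $h_{i-1}$ first reaches height $1$ in $\mathbb H(\beth^{(n)})$ and $\mu=n^{-(n+1)}$ (any sufficiently tiny positive number works).

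I would first verify by induction on $i$ that between $t_i$ and the next resident change $T_i$ no other resident change occurs, so that $h_i$ grows from $0$ with constant slope $a_i=n^i$ and reaches height $1$ at $T_i=T_{i-1}+\mu+n^{-i}$, contributing an $F$-increment of $n^i$. Equivalently, the continued line $l_i$ has slope $c_i=f(t_i)+a_i=\sum_{\ell=1}^i n^\ell$ and takes over the resident ray at $T_i$. The heart of the argument is a gap estimate: since $l^r(T_j)=l_j(T_j)$, for $1\le j\le k\le n$ one has
\begin{equation*}
l^r(T_k)-l_j(T_k)=\sum_{m=j}^{k-1}(c_m-c_j)(T_{m+1}-T_m).
\end{equation*}
Substituting $c_m-c_j=\sum_{p=j+1}^m n^p$ and $T_{m+1}-T_m=\mu+n^{-(m+1)}$, a direct calculation shows the summand for each $m\ge j+1$ equals $(1-n^{-(m-j)})/(n-1)+O(1/(n(n-1)))$, and the full sum is bounded by $(k-j-1)/(n-1)+o(1)$, which is strictly less than $1$ whenever $k\le n$. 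Consequently $h_j(T_k)=1-(l^r(T_k)-l_j(T_k))>0$, so every $h_j$ with $1\le j\le k$ is alive at $T_k$, giving exactly $k$ kinks at the $k$-th resident change time.

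Finally I would verify nondegeneracy: the same gap estimate shows $h_j(T_k)\le 1-1/(n-1)<1$ for $j\le k-2$, so only $h_{k-1}$ and $h_k$ sit at height exactly $1$ at $T_k$ (precluding three-way ties); after $T_n$ all surviving trajectories decay with distinct very steep slopes and hence have distinct extinction times, none of which equals any $t_i$ or $T_k$; and each $t_i=T_{i-1}+\mu$ is strictly larger than a resident change time and strictly smaller than the next one. I expect the main technical obstacle to be the gap estimate: the cumulative fitness $c_m-c_j$ blows up geometrically while the inter-resident intervals $T_{m+1}-T_m$ shrink geometrically, and one must match the two precisely so that each summand contributes $O(1/n)$ and the total remains below $1$ uniformly in $j$ and $k$ with $k\le n$.
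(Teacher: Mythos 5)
Your proposal is correct, but it takes a genuinely different route from the paper. The paper also works in the continued lines representation, but it uses \emph{bounded} slopes $a_k$ chosen so that $l_k$ has absolute slope $k\in[n]$, and it packs all $n$ birth times into $[0,1)$, \emph{before} the first resident change; consecutive rays $l_k,l_{k+1}$ are arranged to intersect at $1+\tfrac{2k}{n^2}$, so all $n$ resident changes occur inside a window of length $\tfrac{2(n-1)}{n^2}$ --- too short for any trajectory to lose a full unit of height, whence all $n+1$ trajectories are alive at every resident change and $k_n=n(n+1)$. The verification there is a two-line computation of pairwise intersection times. Your construction is dual in spirit: you spread the births out so that each trajectory becomes resident undisturbed, and you pay for this with geometrically growing slopes $a_i=n^i$ and a telescoping gap estimate $l^r(T_k)-l_j(T_k)=\sum_{m=j}^{k-1}(c_m-c_j)(T_{m+1}-T_m)$ showing each of the at most $n-2$ inter-resident intervals contributes only about $\tfrac1{n-1}$ to the height deficit. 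Your gap estimate is correct (I checked the arithmetic: the $\mu$-contribution per summand is indeed $O(1/(n(n-1)))$ since $m+1\le n$, and the total stays below $\tfrac{(n-2)(n+1)}{n(n-1)}<1$), and it yields $\ge n(n+1)/2$ kinks. What each approach buys: the paper's example is shorter to verify and uses slopes bounded by $n$ (so it doubles as a tightness example for the $O(kn)$ bound of Corollary~\ref{cor-k}), while yours shows the quadratic blow-up does not require the resident changes to be nearly simultaneous. Two cosmetic points: your lower bound $1-\tfrac1{n-1}$ on $h_j(T_k)$ for $j\le k-2$ should be $1-\tfrac1n$ (the $m=k-1$ summand is $(c_{k-1}-c_j)(T_k-T_{k-1})>n^{k-1}\cdot n^{-k}=\tfrac1n$), which still rules out three-way ties; and distinctness of the extinction times after $T_n$ is not actually needed for nondegeneracy as the paper defines it --- only that no extinction time coincides with a resident change time or a birth time, which is automatic since none of either occur after $T_n$.
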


The proof of Proposition~\ref{prop-cousinen} can be found in Section~\ref{sec-cousinen}.

\begin{remark}
Apart from the slope changes corresponding to the definition of $k_n$ in~\eqref{kn}, at each time $t_i$, $i=1,\ldots,n$, there is an additional slope change where the slope of the $i$-th trajectory switches from $0$ to $a_i$. Moreover, at most $n$ trajectories eventually hit 0 again and then their slope drops to 0. Altogether, these are at most $2n$ additional slope changes, and precisely $2n$ unless there are multiple indices $j$ with the same value of $f(t_j) + a_j$ as the ultimate resident. (In the case of the PIT, the latter is only possible if $\P(A_1=c)>0$ holds for some $c>0$.) Hence, given Proposition~\ref{prop-cousinen}, ignoring these $O(n)$ additional slope changes will not yield substantial changes in the worst-case estimates of the number of kinks.
\end{remark}

Given Proposition~\ref{prop-cousinen}, computing the slopes of all trajectories in a system of $n$ interacting trajectories clearly takes $\Omega(n^2)$ time in the worst case: Even the size of any data structure representing the trajectories requires $\Omega(n^2)$ space in the worst case. One might think that even computing only $t \mapsto \varrho^{(\beth)}(t)$, the current resident type for all times takes a time that is quadratic in the size of $\beth$. However, the following theorem shows that this important function of the system of interacting trajectories can be found much more efficiently. Here, by computing $t \mapsto \varrho^{(\beth)}(t)$, a piecewise constant function, we mean producing an ordered array of its breakpoints together with the associated values. This representation allows evaluating the function at any $t$ via binary search in $O(\log n)$. We assume $\beth$ is ordered, i.e. the corresponding $t_i$'s are sorted in the input. See Section~\ref{sec-unordered} for a discussion on the unordered case.

\begin{thm}\label{theorem-algorithm}
The algorithmic complexity of finding $t \mapsto \varrho^{(\beth)}(t)$, $t \geq 0$, for any finite, nondegenerate, ordered $\beth$ of size $n$ is $\Theta(n)$.
\end{thm}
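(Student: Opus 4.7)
The lower bound $\Omega(n)$ is immediate, since any correct algorithm must read the input $\beth$ of size $n$. For the matching $O(n)$ upper bound, the plan is to design an algorithm based on the continued lines representation of Remark~\ref{rem-contlines}. The central object will be the \emph{future envelope} at the current time: the sequence of rays $l_j$ scheduled (given the rays seen so far) to become resident in the future, ordered by increasing slopes, which coincides with the order in which they would become resident. This envelope will be maintained as a doubly-linked list $S$, whose front entry is the current resident.

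The algorithm processes the rays $l_1,\ldots,l_n$ in birth-time order. For each $l_k$, the first phase is to \emph{advance time to $t_k$} by repeatedly popping from the front of $S$ while the next scheduled resident change (the crossing time of the first two entries of $S$) occurs at or before $t_k$, outputting each associated resident-change event; these pops are safe to commit because all remaining birth events occur strictly after $t_k$ and cannot undo them. The second phase is to \emph{insert $l_k$}: setting $l_r := S[0]$, $s_k := s_r + a_k$, and $y_k := l_r(t_k) - 1$ by the continued-lines rules, one scans forward through $S$ to decide either that $l_k$ is dominated by the existing envelope (and is then discarded) or to locate the position after which $l_k$ first reaches the envelope; $l_k$ is inserted at that position, and any immediately subsequent entries with slope below $s_k$, being now dominated by $l_k$, are removed. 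After all $n$ births have been handled, the remaining entries of $S$ are flushed to emit the remaining resident-change events.

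Correctness will follow by induction on $k$: the invariant is that after processing the $k$-th ray, $S$ is the future envelope of $\{l_0,l_1,\ldots,l_k\}$ from the current time onwards, with slopes strictly increasing from front to back, and the inductive step is a case analysis of how $l_k$ fits into $S$ using the ray geometry. For the complexity, each ray is inserted into $S$ at most once and removed at most once, contributing $O(n)$ work in total. The delicate point is the total cost of the forward scans in the insertion phase, which in the worst case may traverse several entries of $S$ per insertion. The main obstacle will be arguing that this total scan cost is also $O(n)$: the plan is a potential-function / charging argument tailored to the continued lines structure, exploiting the defining property that every new ray starts exactly one unit below the current envelope, so that each ``scan-past'' step over some entry $l_j$ should be chargeable to a later structural event (either the commit of a resident change involving $l_j$, or its removal by a later dominance event) that can absorb its amortized cost. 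Once this amortization is established, the overall $\Theta(n)$ running time follows.
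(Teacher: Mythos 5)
Your overall architecture (continued lines representation, maintaining the future envelope of scheduled residents, amortized analysis with each ray inserted and removed at most once) matches the paper's, and your advance-time/pop-from-front mechanism is a fine substitute for the paper's monotone forward pointer for locating parents. However, the step you yourself flag as ``the main obstacle'' --- bounding the total cost of the forward insertion scans by $O(n)$ via a charging argument --- is a genuine gap, and the charging scheme you sketch cannot work. An entry of $S$ can be scanned past $\Omega(n)$ times without ever being popped or removed in between, so it cannot absorb all those scan steps by being charged to ``a later structural event.'' Concretely, start from the configuration of Proposition~\ref{prop-cousinen} with $m=n/2$ rays of slopes $1,\ldots,m$ all scheduled as future residents just after time $1$, and then insert $n/2$ further rays in a tiny time window, each with absolute slope $m-\tfrac12$. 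None of these becomes resident, yet a forward scan can only certify domination once it reaches the first envelope entry whose slope exceeds $m-\tfrac12$ and at which the new ray has not caught up --- here the last entry --- so each insertion costs $\Theta(m)$ while removing nothing, for $\Theta(n^2)$ total.

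The missing ingredient is a structural lemma that the paper's proof exploits implicitly: writing $\sigma$ for the absolute slope of the newly born ray and recalling that it starts exactly one unit below the (convex) future envelope while every already-scheduled future resident is strictly less than one unit below it and is catching up strictly faster whenever its slope exceeds $\sigma$, one shows that the new ray crosses the envelope \emph{if and only if} $\sigma$ strictly exceeds the slope of the \emph{last} (steepest) entry of $S$ --- and in that case it becomes the new last entry. Hence domination is testable in $O(1)$ against the back of $S$, and on success one scans \emph{backward} from the back, discarding each entry passed (each such entry is removed permanently, so the backward scans cost $O(n)$ in total). With the scan direction reversed and this lemma in place, your algorithm coincides with the paper's and the $\Theta(n)$ bound follows; without it, the forward-scan amortization fails. (Your $\Omega(n)$ lower bound via reading the input is acceptable, though the paper's argument via the worst-case output size $\Omega(n)$ is cleaner.)
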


The proof of this theorem will be provided in Section~\ref{sec-algorithm}. The proof of the upper bound is constructive: we present an algorithm that computes the array describing the function in $O(n)$ time. The key idea is to use the continued lines representation of the system, which makes it possible to provide such an algorithm using a simple data structure,
despite the possibly large number of kinks.
As a byproduct, the algorithm also computes the {half-lines} in the continued lines representation of the system. From these outputs, the resident fitness function $t \mapsto f(t)$ is also easily computable, since it corresponds to the slopes of the trajectories of the residents, cf.\ Lemma~\ref{lem-contlines}.

Given this output, in Section~\ref{rem-fix} we explain how to determine the set of trajectories that reach fixation and the ancestral relations between these trajectories (the notion of fixation will also be explained there), %in Section~\ref{rem-GL} we discuss the relation of our algorithm to the Gerrish--Lenski type heuristics introduced in~\cite{GL98,BGPW19},
and in Section~\ref{rem-givent} below, we summarize how to efficiently compute $h_i(t)$: first for a fixed $t\geq 0$ and fixed $i$, then for all $t\geq0$ with a fixed $i$, and finally for all $i$ and $t\geq0$. In Section~\ref{sec-unordered} we discuss the case of an unordered $\beth$ and
 finally, in Section~\ref{sec-b} we explain that our algorithm extends to a variant of the PIT (and the continued lines representation) which, as we will recall in Section~\ref{rem-modsel}, was conjectured in~\cite{HGSTW24} to arise as a scaling limit of the system of logarithmic frequencies of a Moran model in a different selection regime.

In the special case when we additionally know that each $a_i$ is contained in the set $[k]$ for some $k \geq 1$, the algorithm appearing in the proof of Theorem~\ref{theorem-algorithm} can easily be modified so that it also computes the piecewise slopes of all interacting trajectories (not only the resident fitness or the slopes of the {half-lines} of the continued lines representation) in $O(kn)$ time. This fact is reflected by the following corollary.

\begin{cor}\label{cor-k}
Let $k\in\mathbb{N}$. For any finite, nondegenerate, ordered $\beth$ of size $n$ such that each initial slope $a_i$ lies in $ [k]$, the family of piecewise constant slopes $(t \mapsto v_i(t))_{i\in [n]}$ can be computed via a deterministic algorithm of runtime $O(kn)$.
\end{cor}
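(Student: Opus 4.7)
The plan is to augment the $O(n)$-time algorithm of Theorem~\ref{theorem-algorithm} so that, in addition to the continued lines representation and the resident-type function $\varrho$, it also outputs each slope function $v_i$ as an explicit piecewise constant function. The starting observation is that the main algorithm already produces, in $O(n)$ time, the piecewise linear resident fitness $f(t)$ (with at most $n$ breakpoints at the resident change times $r_j$) together with the ray slope $s_i = f(t_i) + a_i$ of every~$i$. Since $v_i(t) = s_i - f(t)$ on $[t_i, e_i)$ and vanishes outside, reconstructing the piecewise description of $v_i$ reduces to identifying the extinction time $e_i$ and listing the breakpoints of $f$ that lie in $[t_i, e_i]$.

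Concretely, the modified algorithm processes events (births, resident changes, and extinctions) in chronological order while maintaining the set of currently alive trajectories in a doubly linked list supporting $O(1)$ insertion and deletion. At each resident change $r$ we walk through the alive list, append the new slope value $s_i - f(r)$ to each alive trajectory's output piece list, and test whether $l^r(r) \geq l_i(r) + 1$; in that case $e_i$ is located by linear interpolation on the current piece of $l^r$ and $i$ is removed from the alive set. Apart from $O(n)$ bookkeeping for the $2n$ birth and extinction events, the dominant cost is this iteration at resident changes, namely $O(\sum_r N(r)) = O(k_n)$, with $k_n$ as defined in \eqref{kn}.

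The crucial step is to prove that $k_n = O(kn)$ under the hypothesis $a_i \in [k] \subset \N$. First I would note the integer structure: since $f(0)=0$ and every resident change sets $f$ to $f(t_\varrho) + a_\varrho \in \N$, both $f(t)$ and every slope $v_i(t) = s_i - f(t)$ remain nonnegative integers throughout, and each kink decreases $v_i$ by the positive integer $v^*(r) \geq 1$. Classifying the kinks of each trajectory by the sign of $v_i(r-)$, the Lyapunov potential $\Phi(t) := \sum_{i \text{ alive at } t} v_i(t)^+$ increases by $a_i \leq k$ at each birth (for a total of at most $kn$) and decreases by at least $1$ at every kink with positive pre-slope, yielding at most $kn$ such kinks. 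Moreover, since $v_i$ is nonincreasing, each trajectory has pre-slope exactly $0$ at no more than one resident change time, contributing at most $n$ further kinks.

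The main obstacle is the remaining bound on kinks with strictly negative pre-slope. My plan is to combine two facts: (i) once $v_i < 0$ the trajectory dies within at most one unit of time, because $|v_i| \geq 1$ is integer while $h_i \leq 1$; and (ii) the area identity $\int_{t_i}^{e_i} v_i(t)\,\d t = 0$ for extinct trajectories, which together with integer slopes forces $\sum_{\text{negative pieces of }i} |v_i|\,\tau \leq h_i^{\max} \leq 1$ for each $i$. The plan is to charge each negative-pre-slope kink of trajectory $i$ to its ``negative slope budget'' $|v_i(e_i-)| = f(e_i-) - s_i$, and then to bound $\sum_i |v_i(e_i-)|$ by $O(kn)$ by relating it to the jumps of $f$ over the overlapping lifetimes and recycling the positive-slope potential estimate above. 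I expect this amortised bookkeeping to be the most delicate part of the proof; once established, the three partial bounds combine to give $k_n = O(kn)$, and the modified algorithm then runs in $O(n + k_n) = O(kn)$ time, proving the corollary.
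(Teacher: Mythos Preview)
Your algorithmic skeleton and the potential argument bounding positive-pre-slope kinks by $kn$ are fine (aside from a slip: $v_i(t)$ is an integer but certainly not nonnegative). The genuine gap is exactly where you flag it. Charging each negative-pre-slope kink to the budget $|v_i(e_i-)|$ and then trying to bound $\sum_i |v_i(e_i-)|$ via the jumps of $f$ over lifetimes runs into a loop: summing the total slope drop two ways gives $\sum_i(a_i-v_i(e_i-)) = \sum_r v^*(r)N(r)$, hence $\sum_i |v_i(e_i-)| = \sum_r v^*(r)N(r) - \sum_i a_i$, and bounding $\sum_r v^*(r)N(r)$ again requires control of the $N^-(r)$ term, i.e.\ of the very negative-slope kinks you set out to bound. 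The area identity only gives $\sum_j m_j d_j \le 1$ per trajectory, which says nothing about the number of negative pieces when the durations $d_j$ are tiny; and your observation (i) that a negative-slope trajectory dies within one time unit does not by itself limit how many resident changes fall in that unit.

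The paper sidesteps all of this with a short pointwise lemma: $v_i(t) > -2k$ for every $i$ and $t$. The ingredient missing from your sketch, beyond observation~(i), is that if trajectory $j$ becomes resident at time $\tau$ then $f(\tau) = f(t_j) + a_j$ (the resident fitness jumps to $j$'s absolute fitness $s_j$), so the \emph{entire} $f$-increment over $[t_j,\tau]$ is at most $k$; moreover $\tau - t_j \ge 1/a_j \ge 1/k$. Hence if $v_i(\tau) \le -2k$, then $v_i(t_j) = v_i(\tau) + (f(\tau)-f(t_j)) \le -k$, so $v_i \le -k$ throughout $[t_j,\tau]$ and $h_i$ drops by at least $k\cdot(1/k)=1$ there, contradicting $h_i(\tau) > 0$. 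With $v_i \in \{-(2k-1),\ldots,k\}$ and each kink lowering $v_i$ by at least $1$, every trajectory has at most $3k-1$ kinks, giving $k_n \le (3k-1)n$ directly---no split into positive/zero/negative pre-slopes is needed. Your event-driven sweep over alive trajectories then runs in $O(kn)$ as you describe.
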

The proof of Corollary~\ref{cor-k} can be found in Section~\ref{sec-k}.

Our next proposition shows that if one considers the PIT where the birth times of trajectories form a Poisson process, then even the expected number of kinks (due to resident changes) will only grow linearly in time. Here we use a notion of kinks~\eqref{Kt} that is slightly different from the one in~\eqref{kn} above and more convenient to work with; see Remark~\ref{rem-kinks} below for a related discussion.

Let $\lambda,\gamma$ as in Section~\ref{sec-model} and let us consider the $\mathrm{PIT}(\lambda,\gamma)$. For $t\geq 0$ we write $N_t = | \{ i \geq 1 \colon T_i \leq t \}|$ where $T_i$ is the birth time of the $i$-th trajectory (with the aforementioned convention that $T_0=0$). %, then $N_t$ is $\mathrm{Poisson}(\lambda t)$ distributed for any $t >0$, and for any $n\in\N$ and $0=t_0<t_1<\ldots<t_n$, $N_{t_i}-N_{t_{i-1}}$, $i=1,\ldots,n$ are independent.
Define \emph{the total number of kinks due to resident changes up to $t$} as
\[ K_t =\sum_{0<s \leq t \colon s\text{ resident change time}} \ \sum_{i \in \N_0} \mathds 1_{\{ H_i'(t-) \neq H_i'(t+)\} \cap \{ H_i(t) >0 \}} . \numberthis\label{Kt} \]

\begin{prop}\label{lemma-average}
We have $\E(K_t)=O(t)$.
\end{prop}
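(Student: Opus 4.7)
The strategy is to combine a renewal decomposition of the $\mathrm{PIT}(\lambda,\gamma)$ with classical large-deviation estimates for Poisson arrivals and the i.i.d.\ initial slopes. The renewal structure, adapted from~\cite{HGSTW24}, is a sequence of stopping times $0 = R_0 < R_1 < R_2 < \ldots$ at which the system regenerates: a natural choice is to take $R_k$ as a moment when a newly arrived trajectory has just reached height~$1$ and no other trajectory is alive. By the strong Markov property of the Poisson process and the i.i.d.\ distribution of the initial slopes, the increments $R_{k+1} - R_k$ are i.i.d., and the argument of~\cite{HGSTW24} shows $\E[R_1 - R_0] < \infty$ under our assumption on $\gamma$. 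By the elementary renewal theorem, the expected number of completed cycles by time~$t$ is of order~$t$, so by a variant of Wald's identity it suffices to establish $\E[K_{R_1} - K_{R_0}] < \infty$.

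Within a cycle $[R_k, R_{k+1})$ of length $L$, only trajectories born inside the cycle can become alive, since the sole alive trajectory at $R_k$ is the current resident at height~$1$. Writing $N_L$ for the number of new births in the cycle---which, conditional on $L$, is $\mathrm{Poisson}(\lambda L)$---the number of resident change times in the cycle is at most $N_L$ and the number of alive trajectories at each such time is also at most $N_L$. Hence $K_{R_1} - K_{R_0} \leq N_L^2$. The key input, a modification of the estimate in~\cite{HGSTW24}, is that $L$ has an exponentially decaying tail; combined with standard Poisson concentration this gives $\E[N_L^2] < \infty$, whence $\E[K_{R_1} - K_{R_0}] < \infty$. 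The (at most one) incomplete cycle overlapping $t$ contributes at most the same finite constant, so $\E(K_t) = O(t)$.

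The main obstacle, as I see it, is to upgrade finite mean of the cycle length to an exponential tail: a merely polynomial tail would not be enough to bound the quadratic functional $\E[N_L^2]$. This is precisely where classical large-deviation estimates enter: the event that the resident fitness has failed to increase over a long time window forces that no newly born trajectory has reached height~$1$ during this window, which in turn constrains tails of partial sums of i.i.d.\ slopes drawn from $\gamma$ together with Poisson arrival counts, both controlled by standard Cram\'er-type bounds. A secondary subtlety is ensuring that the chosen regeneration event occurs with positive probability per cycle uniformly in the past, which is where nondegeneracy and the strict positivity of the slopes are used.
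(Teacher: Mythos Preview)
Your renewal-decomposition strategy is correct and is precisely the approach the paper takes: define regeneration times at which a single resident is the only alive trajectory, bound the kinks accrued in one cycle by a quadratic functional of the number $N_L$ of births in that cycle, control $\E[N_L^2]$ via the second moment of the cycle length $L$, and conclude by a renewal/Wald argument. The paper uses the renewal extinction times $E_i$ (first time after $E_{i-1}$ at which the resident has changed and is alone) rather than your solitary resident change times, but both regenerate the system; the per-cycle bound there is $(N_{E_i}-N_{E_{i-1}})(N_{E_i}-N_{E_{i-1}}+1)$, i.e.\ your $N_L^2$ is off by the ``$+1$'' for the incumbent resident at the start of the cycle, which is immaterial.

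Two points where your commentary diverges from what is actually needed. First, you do \emph{not} need an exponential tail for $L$: since $\E[N_L^2\mid L]=\lambda^2 L^2+\lambda L$, a finite second moment of $L$ suffices, and this follows directly from the stochastic domination of $L$ by a geometric variable established in~\cite{HGSTW24}. Your sentence ``a merely polynomial tail would not be enough'' is therefore incorrect, and the Cram\'er-type argument you sketch for the tail of $L$ (via ``resident fitness failing to increase'') is both unnecessary and not quite the right event. Second, the paper's proof does invoke Cram\'er's theorem, but for a different purpose: rather than use Wald's identity directly as you do, it splits on the event $\{\eta_t+1\le (1/m_1+\delta)t\}$, applies H\"older on the complement together with the crude bound $K_t\le N_t(N_t+1)$, and then uses large deviations for the renewal counting process $\eta_t$ to make that term exponentially small. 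Your Wald-based route is more elementary and sidesteps this entirely; the paper's route is heavier but makes the exponential unlikeliness of kink-rich scenarios explicit.
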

The proof of Proposition~\ref{lemma-average} will be carried out in Section~\ref{sec-average}. This proof is based on an extension of the aforementioned renewal argument of~\cite{HGSTW24}, combined with classical large-deviation inequalities. The main message of this proof is that the clustering of trajectories that leads to many kinks (like in the construction in the proof of Proposition~\ref{prop-cousinen}) is exponentially unlikely in the number of participating trajectories if these trajectories follow each other directly in the PIT according to their birth times.

\begin{remark}[Relation between Propositions~\ref{prop-cousinen} and~\ref{lemma-average}]\label{rem-kinks}
$K_t$ is the number of resident changes up to time $t$ counted
with multiplicity, i.e. with the number of trajectories that change their
slopes due to this resident change. Note that putting $\beth^{(n)}=((T_i,A_i))_{i \in [n]}$, the total number of kinks at resident change times due to the interactions of the first $N_t$ trajectories is in general an upper bound for $K_t$, i.e.\
\[ K_t \leq k_{N_t}(\beth^{(N_t)}), \]
the difference $k_{N_t}(\beth^{(N_t)})-K_t$ being the number of those kinks in $\mathbb{H}(\beth^{(N_t)})$ that happen after time $t$. However, it will be apparent from the proof of Proposition~\ref{lemma-average} that following time 0, after a stopping time with finite expectation, a state will be reached where $H_i(t)=0$ for all but one $i$. Hence, by the strong Markov property, this state will be visited infinitely often, with i.i.d.\ waiting times of finite expectation inbetween. Since a trajectory can only have kinks before its height ultimately reaches $0$, it follows that the difference between $K_t$ and $k_{N_t}(\beth^{(N_t)})$ is of finite order. This way, Proposition~\ref{lemma-average} can indeed be seen as complementary to Proposition~\ref{prop-cousinen} in the Poissonian case.
\end{remark}

Based on the proof of Proposition~\ref{lemma-average} and the one of~\cite[Theorem 2.8]{HGSTW24}, in Section~\ref{sec-average} we will also derive the following corollary, which can be seen as a strong law of large numbers for the total number of kinks due to resident changes.

\begin{cor}\label{cor-speedofkinking}
Let $E_0=0 < E_1 < E_2 < \ldots$ be the \emph{renewal extinction times}, defined via
\[ E_i = \inf \{ t \geq E_{i-1} \mid \varrho(t) \neq \varrho(E_{i-1}) \text{ and only the resident type at time $t$ has a positive height} \} \numberthis\label{Ei} \]
for $i \geq 1$.
We have
\[ \lim_{t\to\infty} \frac{K_t}{t} = \frac{\E(K_{E_1})}{\E(E_1)}, \numberthis\label{speedofkinking} \]
almost surely. The limit on the right-hand side is in $(0,\infty)$ for any choice of $\gamma$.
\end{cor}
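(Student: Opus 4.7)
The plan is to recognise $(E_i)_{i\ge 0}$ as genuine renewal epochs of the $\mathrm{PIT}(\lambda,\gamma)$ and to apply the strong law of large numbers for renewal-reward processes to $K_t$. By definition, just after time $E_i$ only the new resident trajectory is alive --- it sits at height $1$ with momentary slope $0$ --- and the Poisson arrivals and i.i.d.\ initial slopes after $E_i$ are independent of the past. Applying the strong Markov property at the stopping time $E_i$ therefore shows that the pairs
\[ \big(E_i - E_{i-1},\, K_{E_i} - K_{E_{i-1}}\big)_{i \ge 1} \]
are i.i.d. This is the same renewal structure that supports the proof of \cite[Theorem 2.8]{HGSTW24}.

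Next I would verify that both coordinates have finite, strictly positive expectation. Finiteness of $\E(E_1)$ is supplied by the renewal argument in \cite[Theorem 2.8]{HGSTW24}, while $\E(E_1) > 0$ is trivial. On the kink side, $K_{E_1} \ge 1$ almost surely, because at the resident change time $E_1$ the new resident itself contributes a kink (its right slope drops from a positive value to $0$ while it has height $1 > 0$), so $\E(K_{E_1}) > 0$. Finiteness $\E(K_{E_1}) < \infty$ is where the work underlying Proposition~\ref{lemma-average} plays its role: its proof employs large-deviation estimates on the number of trajectories that can participate in a single renewal cycle and, accumulating kinks per participating trajectory, yields an exponentially thin tail for $K_{E_1}$ and hence a finite mean.

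With these moment bounds in hand, the conclusion follows from the classical strong law for renewal-reward processes. Writing $N^E(t) = \max\{i \ge 0 : E_i \le t\}$, the elementary renewal theorem gives $N^E(t)/t \to 1/\E(E_1)$ almost surely, while the SLLN applied to the telescoping i.i.d.\ partial sums $K_{E_n} = \sum_{i=1}^n (K_{E_i} - K_{E_{i-1}})$ gives $K_{E_n}/n \to \E(K_{E_1})$ almost surely. The sandwich $K_{E_{N^E(t)}} \le K_t \le K_{E_{N^E(t)+1}}$, valid because $t\mapsto K_t$ is nondecreasing, combines these two facts into~\eqref{speedofkinking}, with the limit lying in $(0,\infty)$ by the bounds established above. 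The main obstacle is to extract $\E(K_{E_1}) < \infty$ cleanly from the proof of Proposition~\ref{lemma-average}, since that proof is organised to establish the time-indexed bound $\E(K_t) = O(t)$ rather than to control a single renewal cycle directly; the correct reformulation is that its large-deviation step already controls the number of trajectories alive within one cycle, and one must package this as a genuine tail bound on the random variable $K_{E_1}$ itself before invoking the renewal-reward SLLN.
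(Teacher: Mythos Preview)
Your overall architecture matches the paper's: recognise $(E_i)$ as renewal times, use the i.i.d.\ structure of the increments $(E_i-E_{i-1},K_{E_i}-K_{E_{i-1}})$, sandwich $K_t$ between $K_{E_{m(t)}}$ and $K_{E_{m(t)+1}}$, and apply the strong law to numerator and denominator separately. That is exactly how the paper proceeds.

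Where you go off track is in your account of $\E(K_{E_1})<\infty$. You locate it in the large-deviation step of the proof of Proposition~\ref{lemma-average} and even flag this extraction as ``the main obstacle''. In fact the paper obtains it from the elementary bound~\eqref{KE} established \emph{before} any large-deviation input: one has $K_{E_1}\le N_{E_1}(N_{E_1}+1)$, and conditioning on $E_1$ gives
\[
\E(K_{E_1})\le \E\big(\lambda^2 E_1^2+2\lambda E_1\big)=\lambda^2 m_2+2\lambda m_1<\infty,
\]
using only that $E_1$ is stochastically dominated by a geometric random variable and hence has finite second moment. The Cram\'er-type estimate in the proof of Proposition~\ref{lemma-average} is there to control the \emph{number of renewal cycles} up to time $t$ (i.e.\ the event $\{\eta_t>(\tfrac{1}{m_1}+\delta)t-1\}$), not the number of trajectories inside one cycle; it plays no role in Corollary~\ref{cor-speedofkinking}. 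So the ``obstacle'' you anticipate does not exist, and your proposed repackaging via an exponential tail on $K_{E_1}$ is unnecessary.

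A minor point: the paper gets $K_{E_1}\ge 2$ almost surely (the initial resident's slope becomes negative \emph{and} the new resident's slope becomes zero at height~$1$), whereas you record only $K_{E_1}\ge 1$. Either suffices for strict positivity of the limit.
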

\begin{remark}\label{rem-analogy}
Corollary~\ref{cor-speedofkinking} is formally analogous to~\cite[Theorem 2.8]{HGSTW24} on the speed of adaptation: There, under the additional assumption that $\E(A_1)<\infty$, using the \emph{solitary resident change times} $L_0=0<L_1<L_2<\ldots$ defined as
\[ L_i = \inf \{ t \geq L_{i-1} \mid ~ t \text{ is a resident change time and $V_j(t) \leq 0$ for all $j \geq 1$} \}, \qquad i \geq 1, \]
it was shown that the resident fitness $t \mapsto F(t)$ satisfies
\[ \lim_{t\to\infty} \frac{F(t)}{t} = \frac{\E(F(L_1))}{\E(L_1)} \in (0,\infty), \numberthis\label{speedofkinkingformula} \]
almost surely. We will explain this analogy in more detail and exploit it in the proof of Corollary~\ref{cor-speedofkinking} below. Note however that a substantial difference between the two results is that if $\E(A_1)=\infty$, then $\lim_{t\to\infty} F(t)/t=\infty$ a.s., while in Corollary~\ref{cor-speedofkinking} the limit is always finite.
\end{remark}
\begin{remark}
As an example, assuming that the system of interacting trajectories depicted in Figure~\ref{fig-kick} is a realization of the $0$-th, first, second, and third trajectory of the PIT, the realization of the first solitary resident change time $L_1$ corresponds to the time $t_3+\frac{1}{a_3}$ when the third (dotted green) trajectory becomes resident, given that the realization $t_4$ of the next birth time is larger than this time. Moreover, the realization of the first renewal extinction time $E_1$ corresponds to the time $t'$ when the second (dash-dotted orange) trajectory $h_2$ becomes extinct (i.e.\ hits zero from above), given that $t_4$ is even larger than $t'$. In general, each renewal extinction time is preceded by a solitary resident change time with no resident changes in between, but between two consecutive solitary resident change times there need not be a renewal extinction time.
\end{remark}

\begin{remark}
Let us consider the special case when $A_1$ (and each $A_i$, $i \geq 1$) is constant and equal to $c$. Then, in the corresponding
Moran model each mutation has the same selective advantage compared to the current resident population (cf.\ Section~\ref{rem-modsel}).

This case seems to be the most amenable to explicit computations in general. For example, we can easily show that the limit in~\eqref{speedofkinking} is $2\lambda$ in this case. To see this, note that if all $A_i$'s are equal to $c$, then any trajectory born after time $0$ will start with slope $c$ at height 0, reach slope $0$ at the first resident change following its birth and reach slope $-c$ at the next resident change. This slope will be preserved until the trajectory reaches height 0 (see also Lemma~\ref{lem-k}). This accounts for 2 slope changes due to resident changes per trajectory, apart from the $0$-th trajectory, which starts from height 1 with slope 0 and suffers only one slope change due to a resident change. Thus, at time $t$, $K_t$ equals $2N_t+1$ minus the current number of trajectories at a positive height with zero slope, minus twice the current number of trajectories at a positive height with positive slope (necessarily equal to $c$). The trajectories at a positive height with zero slope are precisely the current resident and its siblings (i.e.\ the trajectories that were born when the parent of the current resident was resident), and hence their number is not greater than a Poisson distributed random variable with parameter $\frac{\lambda}{c}$. The same bound applies to the trajectories at a positive height with a positive slope, who are exactly the children of the current resident that have already been born by time $t$. It follows that $K_t/(2N_t)$ tends to 1 in probability, and since $N_t/t$ tends to $\lambda$ almost surely (and thus also in probability) due to the Poisson Law of Large Numbers~\cite[p.~42]{K93}, it follows that $K_t/t$ tends to $2\lambda$ in probability, and hence also almost surely along a subsequence. But we know from Corollary~\ref{cor-speedofkinking} that $K_t/t$ tends to $\E(K_{E_1})/\E(E_1)$ almost surely, and therefore the almost sure subsequential limit $2\lambda$ is actually the almost sure limit of $K_t/t$ as $t\to\infty$.

This simple case already shows that~\eqref{speedofkinking} is in general not true with $E_1$ replaced by $L_1$ everywhere. Indeed, $\E(L_1) = \frac{1}{\lambda} + \frac{1}{c}$, since $L_1$ is almost surely equal to the time $t_1+\frac{1}{c}$ when the first-born trajectory $h_1$ becomes resident. Moreover, $\E(K_{L_1})=\frac{\lambda}{c} + 1$, since before time $L_1$ there is no resident change and thus no slope change that corresponds to the definition of $t \mapsto K_t$, and at time $L_1$ the slope of $h_1$ and all the other Poisson($\lambda/c$) trajectories born between time $t_1$ and $t_1+\frac{1}{c}=L_1$ changes. This implies that $\E(K_{L_1})/\E(L_1) =\lambda $, while we just showed that $\E(K_{E_1})/\E(E_1)=2\lambda$.

As we already anticipated in Remark~\ref{rem-defchanges},~\cite{HGSTW24} studied a variant of the PIT where trajectories have a chance to stay constant equal to zero, which corresponds to rapid extinction of the given mutant subpopulation in the Moran model. There, in the constant-$c$ case, trajectories arrive at rate $\lambda$ but independently with probability $\frac{1}{1+c}$ they stay constant 0 and only with probability $\widetilde\lambda:=\frac{\lambda c}{1+c}$ they follow the dynamics described in Section~\ref{sec-intro}. In that setting, $\lim_{t\to\infty} K_t/t$ would be $2\widetilde\lambda=\frac{2\lambda c}{1+c}$ almost surely, while the speed of adaptation is $\lim_{t\to\infty} F(t)/t = \frac{\lambda c^2}{1+c+\lambda}$ almost surely according to~\cite[Theorem 2.10]{HGSTW24}, a result which in turn originates from~\cite[Section 3.1]{BGPW19}.

\end{remark}

\section{Proofs}\label{sec-proofs}
\subsection{Lower bound construction for the number of kinks: proof of Proposition~\ref{prop-cousinen}}\label{sec-cousinen}
\begin{proof}%[Proof of Proposition~\ref{prop-cousinen}]
A sequence $(\mathbb{H}(\beth^{(n)})_{n\in\mathbb{N}}$ corresponding to the statement of the proposition is given as follows. For simplicity, we use the continued lines representation introduced in Section~\ref{rem-contlines} (which is in bijection with the representation as a system of interacting trajectories; as an illustration we depict the corresponding interacting trajectories for $n=10$ in Figure~\ref{fig-LB}).

\begin{figure}
    \centering
    \includegraphics[width=\linewidth]{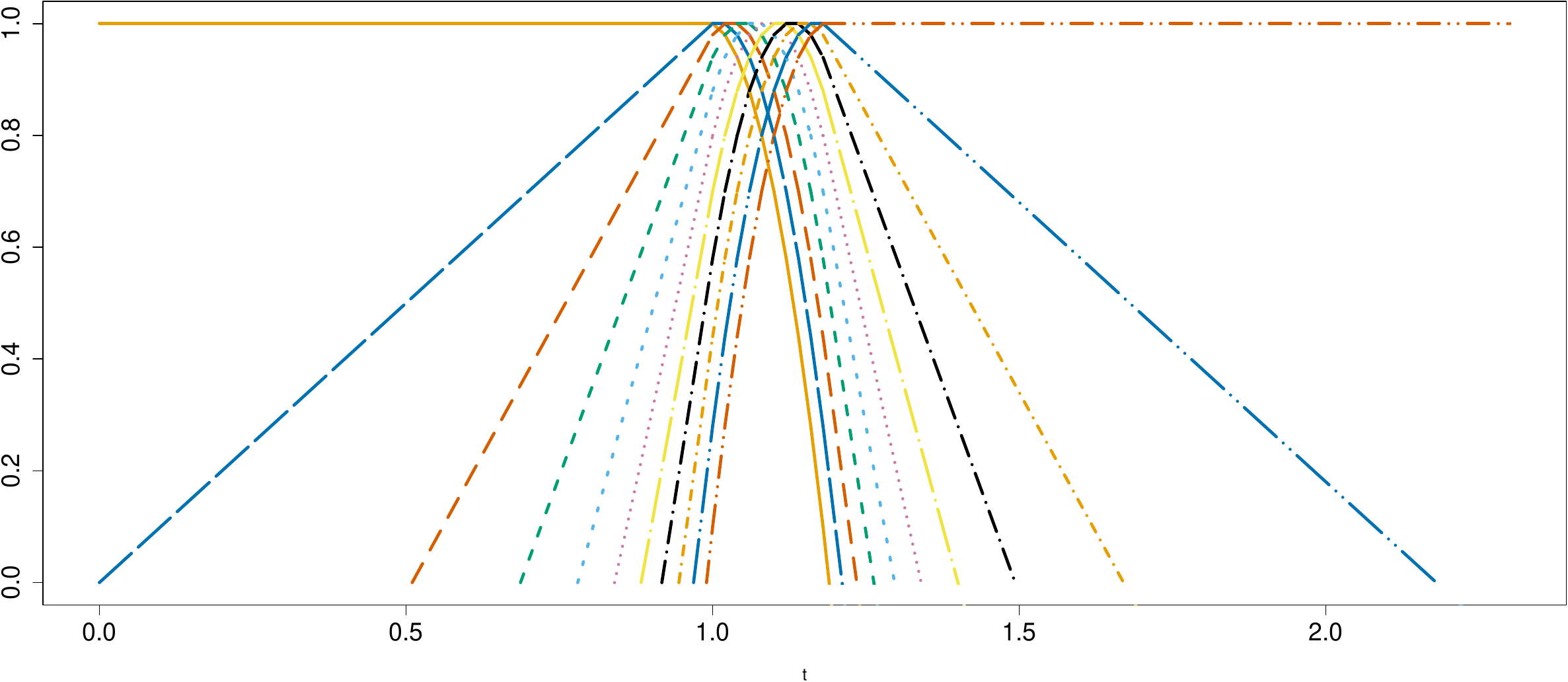}
    \caption{The interacting trajectories corresponding to the {half-lines} $\ell_i$ in the proof of Proposition~\ref{prop-cousinen} for $n=10$. We see that the trajectories $h_1,\ldots,h_{10}$ are all born during the time of residency of the initial resident, in an increasing order w.r.t.\ their initial slopes. Each trajectory becomes resident, and thus the order of time intervals of residencies is the same as the order of their birth times, i.e.\ the reverse of the order of their initial slopes.}\label{fig-LB}
\end{figure}

For $n \in \mathbb{N}$ and $k \in \{0,1,\ldots,n\}$, we define the $k$-th {half-line} $\ell_k$ corresponding to the continued lines representation of $\mathbb{H}(\beth^{(n)})$ via the equations $\ell_0(t)=1$, $t \geq 0$, and for $k \geq 1$,
\[ \ell_k(t) = k \Big( t-(1-\frac{1}{k} + \frac{k-1}{n^2}) \Big) , \qquad t \geq 1-\frac1k+\frac{k-1}{n^2}. \]
First we claim that $\ell_k$ and $\ell_{k+1}$ intersect at time $1+\frac{2k}{n^2}$. Indeed, this is clear for $\ell_0(t)=1$ and $\ell_1(t)=t$. Moreover, for $k \geq 1$, $t$ is the time when $\ell_k$ and $\ell_{k+1}$ intersect if and only if
\[ k \big( t-1+\frac1k - \frac{k-1}{n^2}\big) = (k+1) \big( t-1 +\frac{1}{k+1}-\frac{k}{n^2} \big), \]
which is satisfied if and only if $t = 1+\frac{2k}{n^2}$. Since $\ell_{k+1}$ is steeper than $\ell_k$, this implies that \begin{equation*} \ell_{k+1}(t) > \ell_k(t) \quad \Leftrightarrow \quad t > 1+\frac{2k}{n^2}. \end{equation*}
We now claim that it is even true that
\begin{equation}\label{whoisres} \ell_0 \text{ is resident on } [0,1) \text{ and for $k \geq 1$, } \ell_{k} \text{ is resident on } \big[1+\frac{2(k-1)}{n^2}, 1+\frac{2k}{n^2} \big]. \end{equation}
Indeed, recall that at a given time $t$, the index of the resident is the index of the {half-line} that is at the highest position among all {half-lines} at time $t$ (which implies that the {half-line} has already been started by time $t$, and in the case of a draw, the index with the higher slope is the index of the resident). Now,~\eqref{whoisres} follows inductively from the fact that $0$ is obviously resident at time $0$, type $k$ cannot be resident at time $t<1+\frac{2(k-1)}{n^2}$ because then $\ell_{k-1}(t) > \ell_k(t)$, and it cannot be resident at time $t > 1 + \frac{2k}{n^2}$ either because then $\ell_{k+1}(t)>\ell_k(t)$, but at any time there must be a resident type, so $\ell_k$ must be resident for $\frac{2(k-1)}{n^2} < t < \frac{2k}{n^2}$ and thus by construction also for $t=\frac{2(k-1)}{n^2}$.

Now we claim that the $0$-th trajectory of $\mathbb{H}(\beth^{(n)})$ is still alive at time $1+\frac{2(n-1)}{n^2}$, equivalently, $\ell_0(1+\frac{2(n-1)}{n^2}) > \ell^r(1+\frac{2(n-1)}{n^2})-1 = \ell_{n}(1+\frac{2(n-1)}{n^2}) -1$, where we recall from Section~\ref{rem-contlines} that $\ell^r(t)$ is the height of the current resident {half-line} at time $t$. Indeed, we have that
\[ \ell_0(t) = 1, \qquad \ell_{n}(1+\frac{2(n-1)}{n^2}) = n \big( 1+\frac{2(n-1)}{n^2}-1 +\frac1n - \frac{n-1}{n^2} \big) = n \big( \frac{2n-1}{n^2} \big) = 2-\frac{1}{n}. \]
By construction, for $t > 1+\frac{2(n-1)}{n^2}$, $\ell_0(t)$ is definitely the smallest one among all values $\ell_i(t)$, $i \in \{0,1,\ldots,n\}$. This implies that all other trajectories of $\mathbb{H}(\beth^{(n)})$ are also alive at time $1+\frac{2(n-1)}{n^2}$. Since they are all already born (i.e.\ the corresponding {half-lines} of the continued lines representation are already started) before time $1$, it follows that at all resident change times all trajectories of $\mathbb{H}(\beth^{(n)})$ are alive and thus they change their slopes. This yields $k_n(\beth^{(n)}) = n(n+1) = \Omega(n^2)$.
\end{proof}

\subsection{Proof of Theorem~\ref{theorem-algorithm}}\label{sec-algorithm}

\begin{figure}
\centering
\begin{footnotesize} \includegraphics[scale=0.5]{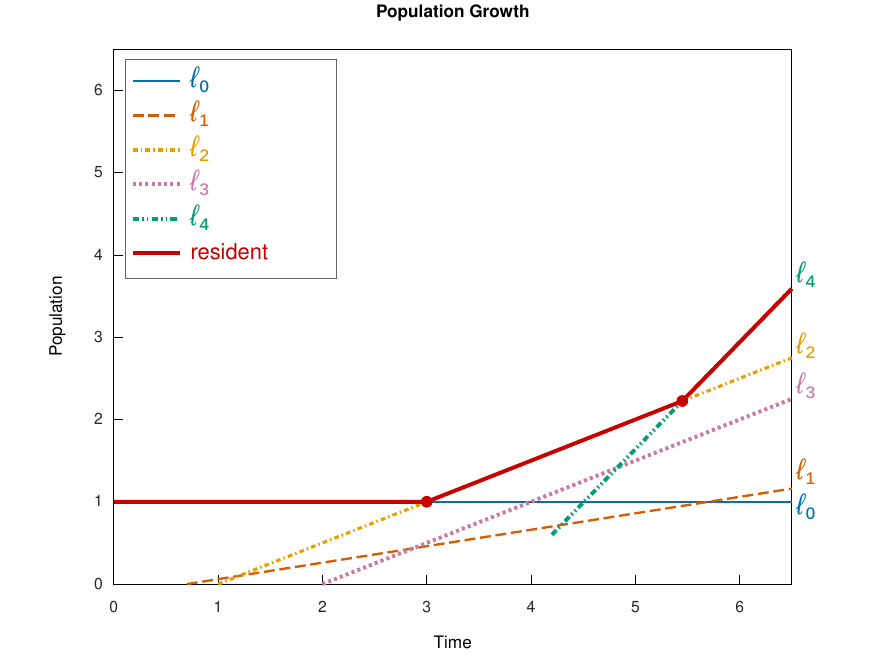}
\includegraphics[scale=0.5]{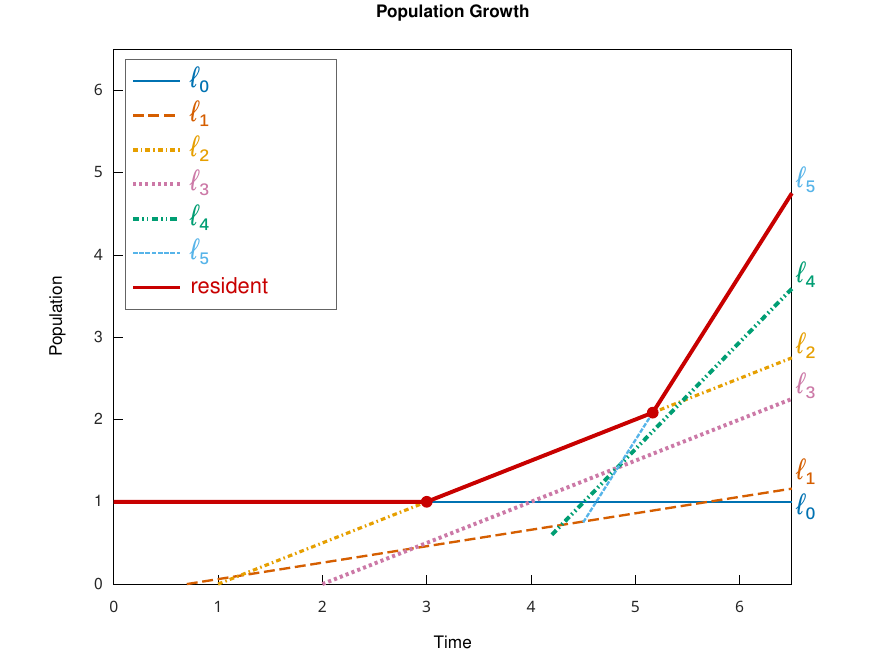} \end{footnotesize}
\caption{\emph{Left}: Continued lines representation of a system of 5 interacting trajectories, with {half-lines} labelled $\ell_0$ through $\ell_4$ in order of birth (see legend). The increasing and convex broken line of the resident {half-line} (i.e.\ always the currently highest {half-line}) is depicted in smooth red (``resident''), with the dots representing the resident changes. $\ell_0$ (the horizontal {half-line} at height 1) is the initial resident trajectory. $\ell_1$, $\ell_2$, $\ell_3$ are born while $\ell_0$ is resident, while the last-born $\ell_4$ is born when $\ell_2$ is resident (middle segment of the resident envelope). Note that each {half-line} starts at the height of the current resident {half-line} minus one. \\
\emph{Right}: Illustration of one of the induction steps of our algorithm in the proof of Theorem~\ref{theorem-algorithm}. The forward linear search identifies the initial height and slope of the last-born $\ell_5$ (resuming from the previous birth time of $\ell_4$). This trajectory is again born at height 1 below the current height of the resident {half-line}. Afterwards, the backward linear search finds the intersection point of $\ell_5$ and the resident {half-line}, and replaces the part of the resident {half-line} after this intersection point by $\ell_5$. In particular, while $\ell_4$ became resident some time after this intersection point in the old system (left image), in the new system it will never become resident since $\ell_5$ intersects it already before it crosses the resident {half-line}.}\label{fig-contlines}
\end{figure}

Recall from Definition~\ref{defdyn} that the continued lines representation consists of {half-lines} with non-decreasing slope, meaning that the segments of half-lines corresponding to the residencies form a convex broken line. The following algorithm is a modified version of the well-known Graham scan (see Algorithm~\ref{alg:residency} below for a pseudocode).

The algorithm processes births one at a time, in increasing order, while maintaining an ordered list of residents. For half-line $i$, the parent is found by linear search from the start of the resident list forwards, and whether it becomes resident is determined by linear search from the end of the resident list backwards.

Recall from Section~\ref{rem-contlines} that in the continued lines representation, the birth of each {half-line} is represented by a pair $(t_i, a_i)$, where $t_i$ is the birth time and $a_i$ is the slope relative to its parent.

A {half-line} is represented by a tuple $(t_i, p_i, b_i, y_i)$, where $t_i$ is the birth time, $p_i$ is the index of the {half-line} of the parent, who is the resident at time $t_i$, $b_i = b_{p_i}+a_i$ is the slope, and $y_i$ is the initial height, which is one less than the height of the parent at time $t_i$. This tuple corresponds to a {half-line} starting at point $(t_i, y_i)$ with slope $b_i$. A residency of a {half-line} is represented by a pair $(s_j, m_j)$, where $s_j$ is the residency start time and $m_j$ is the {half-line} index. A residency starts when its {half-line} intersects the {half-line} of the current resident from below, so the residencies form a convex broken line. The $j$-th residency starts at point $(s_j, y_{m_j}+b_{m_j}(s_j-t_{m_j}))$.

Initially, there is one {half-line} in the system, represented by $(t_0=0, p_0=-1, b_0=0, y_0=1)$, which is the current resident, represented by $(s_0=0, m_0=0)$. When processing the $i$-th birth, the forward linear search finds the parent of {half-line} $i$ in the list of residents, which is the last resident $j$ with $s_j \leq t_i$. The $i$-th {half-line} is represented by $(t_i, p_i=m_j, b_i=(b_{p_i}+a_i), y_i=(y_{p_i}+b_{p_i}(t_i-t_{p_i})-1))$. Since the input is ordered by $t_i$, the search for the parent can start from the parent of the previous {half-line}.

If {half-line} $i$ is below the convex broken line of residents, it does not become a resident. Otherwise, there is an intersection point, which is found by backward linear search. Starting from the last resident in the list, it discards all residency segments from the previous solution below the new {half-line} until the intersection point $(x,y)$ is found and in that case the new residency represented by $(x, i)$ is added to the end.

Since both linear searches pass each {half-line} at most once, the algorithm has runtime $O(n)$. Note here that as a byproduct, this algorithm also computed the half-lines represented by the $(t_i, p_i, b_i, y_i)$ tuples in the continued lines representation of the system.

The lower bound construction appearing in the proof of Proposition~\ref{prop-cousinen} (see Section~\ref{sec-cousinen}) shows that for all $n\in\N$ there exists $\beth^{(n)}$ such that $\mathbb{H}(\beth^{(n)})$ exhibits $n$ resident changes. Therefore, the size of the output of any algorithm computing $t \mapsto \varrho^{(n)}(t)$ is $\Omega(n)$ in the worst case, and hence the runtime of any such algorithm is also $\Omega(n)$.

We conclude that the runtime complexity is indeed $\Theta(n)$, which completes the proof of Theorem~\ref{theorem-algorithm}.

\algrenewcommand\algorithmicrequire{\textbf{Input:}}
\algrenewcommand\algorithmicensure{\textbf{Output:}}
\begin{algorithm}[H]
\caption{Computation of the residency sequence $t \mapsto \varrho^{(\beth)}(t)$}\label{alg:residency}
\begin{algorithmic}[1]
\Require Nondegenerate, ordered $\beth = ((t_i, a_i))_{i=1}^n$
\Ensure Array ${\mathcal{R}} = ((s_j, m_j))_j$ representing the residencies;
\Statex \hspace{2.45em} Array ${\mathcal{HL}} = ((t_i, p_i, b_i, y_i))_{i=0}^n$ representing the {half-lines}
\State ${\mathcal{HL}}[0] \gets (t_0=0,\; p_0=-1,\; b_0{=}0,\; y_0{=}1)$
\State ${\mathcal{R}}[0] \gets (s_0{=}0,\; m_0{=}0)$ \Comment{half-line $0$ is the initial resident starting at $t=0$}
\State $j \gets 0$ \Comment{forward index: last residency with start time $\leq t_i$}
\State $k \gets 0$ \Comment{end index: last valid entry in ${\mathcal{R}}$}
\For{$i \gets 1$ \textbf{to} $n$}
  \State $(t_i, a_i) \gets \beth[i]$
  \While{$j < k$ \textbf{and} $s_{j+1} \leq t_i$} \Comment{Forward search: last residency before $t_i$}
    \State $j \gets j + 1$
  \EndWhile
  \State $p_i \gets m_j$ \Comment{half-line $p_i$ is the resident at time $t_i$, i.e.\ the parent of half-line $i$}
  \State $b_i \gets b_{p_i} + a_i$ \Comment{absolute slope of half-line $i$}
  \State $y_i \gets y_{p_i} + b_{p_i}(t_i - t_{p_i}) - 1$ \Comment{initial height: one below parent at $t_i$}
  \State ${\mathcal{HL}}[i] \gets (t_i, p_i, b_i, y_i)$
  \While{$k \geq 0$ \textbf{and} $b_i > b_{m_k}$} \Comment{Backward search: insert half-line $i$ into ${\mathcal{R}}$}
    \State $x \gets \bigl((y_{m_k} - b_{m_k} t_{m_k}) - (y_i - b_i t_i)\bigr)\,/\,(b_i - b_{m_k})$ \Comment{intersection time}
    \If{$x < s_k$}
      \State $k \gets k - 1$ \Comment{half-line $m_k$ is dominated by half-line $i$ and does not become a resident}
      \State \textbf{continue}
    \EndIf
    \State $k \gets k + 1$
    \State ${\mathcal{R}}[k] \gets (x,\; i)$ \Comment{half-line $i$ becomes resident at time $x$}
    \State \textbf{break}
  \EndWhile
\EndFor
\State \Return ${\mathcal{R}}$, ${\mathcal{HL}}$
\end{algorithmic}
\end{algorithm}

\subsection{Proof of Corollary~\ref{cor-k}}\label{sec-k}
The key step of the proof of Corollary~\ref{cor-k} is the following lemma.
\begin{lem}\label{lem-k}
Let $k>0$. For any nondegenerate, ordered (and not necessarily finite) $\beth$ of size $n$ such that each initial slope $a_i$ lies in $ (0,k]$, for all $t \geq 0$ and $i \geq 0$ we have $v_i(t) \in (-2k, k]$.
\end{lem}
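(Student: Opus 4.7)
The upper bound $v_i(t)\le k$ is immediate from Definition~\ref{defdyn}: at birth $v_i$ jumps from $0$ to $a_i\le k$ and is subsequently only decreased (by the nonnegative $v^\ast$ at each resident change) or reset to $0$ at extinction; the case $i=0$ is trivial since $v_0(0)=0$.

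For the strict lower bound the plan is to work in the continued lines representation (Remark~\ref{rem-contlines}). There $l_i$ is a ray of constant slope $f(t_i)+a_i$ and the envelope $l^r$ has slope $f(\cdot)$, so while $i$ is alive one has $v_i(t)=f(t_i)+a_i-f(t)$. Proving $v_i(t)>-2k$ therefore reduces to showing
\begin{equation}\tag{$\ast$}
  f(t)-f(t_i)\;<\;a_i+2k\qquad\text{whenever $i$ is alive at $t$.}
\end{equation}
A key structural observation is that at every resident change~$r$, the jump $f(r)-f(r-)$ equals the slope difference between the new and old resident rays, and is thus bounded above by $a_{\varrho(r)}\le k$ (with equality in the generic case when the outgoing resident is the parent of $\varrho(r)$ in the continued-lines tree, by~\eqref{falternative}). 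So every jump of $f$ lies in $(0,k]$, and $(\ast)$ is automatic as soon as at most two resident changes occur in $(t_i,t]$.

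It remains to handle three or more resident changes. I would introduce the piecewise-linear \emph{death function} $\phi(s):=l^r(s)-l_i(s)$: it starts at $\phi(t_i)=1$, $i$ is alive at $s$ iff $\phi(s)\in[0,1)$, and its slope jumps up by $\Delta_j\in(0,k]$ at each resident change $r_j\in(t_i,t]$. The catching-up geometry of the continued lines yields $r_j-t_{\varrho(r_j)}=1/\Delta_j$ (the new resident's ray starts one unit below the envelope with slope excess $\Delta_j$) and, by nondegeneracy, $r_j-r_{j-1}>1/\Delta_j$ for $j\ge 2$, together with $r_1-t_i\le 1/a_i$ coming from $\phi(r_1)\ge 0$. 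Splitting at the subinterval (if any) on which $i$ is itself resident (where $\phi$ is constantly $0$) and plugging these bounds into the telescoping expression
\[
  \phi(r_m)\;=\;1-a_i(r_1-t_i)+\sum_{j=1}^{m-1}\bigl(S_j-a_i\bigr)(r_{j+1}-r_j)\;<\;1,\qquad S_j:=\sum_{l=1}^{j}\Delta_l,
\]
reduces $(\ast)$ to a family of elementary inequalities, of which the prototype
\[
  \frac{\Delta_1}{\Delta_2}+\frac{\Delta_1+\Delta_2}{\Delta_3}\;<\;1
  \quad\Longrightarrow\quad \Delta_1+\Delta_2+\Delta_3\;<\;2k
\]
over $(\Delta_j)\in(0,k]^3$ (whose supremum equals $2k$, attained only in the unreachable limit $\Delta_1\to 0$, $\Delta_2,\Delta_3\to k$) is representative.

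The chief obstacle is this family of elementary inequalities together with the bookkeeping case split by (i) whether trajectory $i$ itself ever becomes resident and (ii) the number of pre- and post-dethroning jumps. The underlying principle is that the continued-lines geometry forces $\sum_j\Delta_j<2k\le a_i+2k$, from which $(\ast)$ and hence the lemma follow.
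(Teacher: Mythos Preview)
Your upper bound and the reduction to $(\ast)$ are fine, but the timing inequalities you build on are not. The claim $r_j-t_{\varrho(r_j)}=1/\Delta_j$ tacitly assumes that the parent of the new resident is the immediately preceding resident, i.e.\ that $t_{\varrho(r_j)}>r_{j-1}$; when the trajectory becoming resident at $r_j$ was born \emph{before} $r_{j-1}$, the slope excess of its ray over the envelope equals $a_{\varrho(r_j)}>\Delta_j$ on part of $[t_{\varrho(r_j)},r_j]$, and the catch-up time is governed by the whole piecewise-linear envelope rather than by $\Delta_j$ alone. Consequently $r_j-r_{j-1}>1/\Delta_j$ fails too. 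The construction in Proposition~\ref{prop-cousinen} is already a counterexample: there $\Delta_j=1$ for every $j$ while $r_j-r_{j-1}=2/n^2$ can be made arbitrarily small. Without these bounds, $\phi(r_3)<1$ does not imply your prototype constraint $\Delta_1/\Delta_2+(\Delta_1+\Delta_2)/\Delta_3<1$, and the promised reduction to a ``family of elementary inequalities'' collapses.

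What you are missing is that the right back-in-time step is not to the previous resident change but to the \emph{birth time} of the new resident. The paper argues by contradiction: at the first time $\tau$ with $v_i(\tau)\le-2k$, the trajectory $j=\varrho(\tau)$ becoming resident there satisfies $\tau-t_j\ge 1/k$ (since $v_j\le k$), and by~\eqref{falternative} one has $f(\tau)-f(t_j)=a_j\le k$ \emph{exactly}, regardless of how many resident changes occur in $(t_j,\tau)$. Hence $v_i(t_j)\le v_i(\tau)+k\le -k$, so $v_i\le -k$ on all of $[t_j,\tau)$, and integrating gives $h_i(\tau)<1-k\cdot\tfrac1k=0$, a contradiction. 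The single identity $f(\tau)-f(t_j)=a_j$ replaces your entire attempt to control the individual jumps $\Delta_l$.
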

\begin{proof}
Throughout the proof, we will use the following slightly generalized version of~\cite[Lemma 2.5]{HGSTW24} (whose proof is analogous to the one of the original lemma). For all $i \neq j$, recalling that
\[ e_i = \inf \{ t > t_i \colon h_i(t)=0 \}, \]
i.e.\ $e_i$ is the first time when the trajectory $h_i$ hits $0$ again after its birth time $t_i$, we have that
\[ t\mapsto v_i(t) - v_j(t) \text{ is constant on } [\max \{t_i, t_j \}, \min \{e_i, e_j \}). \numberthis\label{lemma4.5} \]
Now assume for a contradiction that for some $t \geq 0$ and $i \geq 0$ we have $v_i(t) \leq -2k$. Put
\[ \tau = \inf \{ s \geq 0 \colon v_i(s) \leq -2k \}. \]
It is clear that our assumption is equivalent to the assumption that $\tau<\infty$.
Recall that $h_i$ is continuous and piecewise linear and hence its right derivative $v_i$ is right-continuous. Thus, if $\tau<\infty$, then $v_i(\tau) \leq -2k$ holds, and thanks to the non-degeneracy assumption, there is a unique $j \in \mathbb{N}_0$ such that the $j$-th trajectory becomes resident at time $\tau$ (see Figure~\ref{fig-contradiction} for an illustration of the proof from this point on).

Using that $v_j(s) \leq k$ for all $s$ and that the $j$-th trajectory becomes resident at time $\tau$, it follows that
\[ \tau \geq t_j + \frac1k. \numberthis\label{taularge} \]
 Moreover, by construction, under the assumption $\tau<\infty$, $v_i(t_j)-v_i(\tau)$ equals the increment of the resident fitness between times $t_j$ and $\tau$ (see~\eqref{lemma4.5} or also~\cite[Lemma 2.5]{HGSTW24}), which is precisely equal to $a_j \leq k$. It follows that $v_i(t_j) \leq -k$. Thus, if $\tau<\infty$, then $v_i(s)\leq -k$ for all $s \in [t_j,\tau)$. Therefore,
 using~\eqref{taularge} we have
\[ h_i(\tau)=h_i(t_j) + \int_{t_j}^{\tau} v_i(s) \mathrm{d} s \leq 1 - k(\tau-t_j) \leq 1-k \cdot \frac1k = 0. \]
This is a contradiction because if $\tau<\infty$, then $h_i(\tau)$ must be positive so that the slope $v_i$ can become $-2k$ or less at time $\tau$. Hence, it follows that $\tau=\infty$.
This finishes the proof of the lemma.

\end{proof}

Now we can carry out the proof of Corollary~\ref{cor-k}.

\begin{proof}[Proof of Corollary~\ref{cor-k}]
Note that if each $a_i$ is integer-valued, then so is $v_i(t)$ for all $0 \leq i < n+1$ and $t \geq 0$.
Since at any resident change time, the slope of any trajectory that is currently at a positive height decreases by at least one and, thanks to Lemma~\ref{lem-k}, the slope of any trajectory at any time ranges between $-(2k-1)$ and $k$, it follows that the slope of any trajectory can change at most $3k-1$ times while the trajectory is at a strictly positive height. In Section~\ref{rem-givent} we describe how to compute the height functions $t \mapsto h_i(t)$ for all trajectories in $O(kn)$ time, assuming each trajectory undergoes at most $O(k)$ kinks, which completes the proof of this corollary.
\end{proof}
\begin{remark}\label{rem-k} \phantom{Itt egy új sort csinálunk. \\}
\begin{enumerate}
\item If the initial slopes $a_i$ can take all values in $[k]$, then there exists a nondegenerate $\beth$ such that the corresponding $(h_i)_i$ exhibits some $i \geq 0$ and $t \geq 0$ such that $v_i(t) = -(2k-1)$, which is the smallest possible slope value according to Lemma~\ref{lem-k}. We claim that this holds for $i=0$ if $a_1 = k-1$, $a_2=k$, $t_1 + \frac1{k-1} < t_2 < t_1 +\frac{2}{k-1}-\frac{1}{k}$, and $t_3 > t_2+\frac{1}{k}$. Indeed, in this case, at time $t_1+\frac1{k-1}$, the first trajectory becomes resident and the slope of the $0$-th trajectory drops by $v_{1}(t_1+\frac1{k-1}-)=\frac{1}{k-1}$ from $0$, i.e.\ it becomes $-(k-1)$, and this slope stays $-(k-1)$ until either the second trajectory becomes resident, which happens at time $t_2+\frac1{k}$ (since the third trajectory is only born after this time), or until the $0$-th trajectory hits zero again, which happens at time $t_1+\frac2{k-1} = \big( t_1 + \frac{1}{k-1}\big) + \frac1{k-1}$. Now, since
\[ t_2 + \frac{1}{k} < t_1 + \frac2{k-1} - \frac{1}{k} + \frac{1}{k} = t_1+\frac2{k-1}, \]
until time $t_2 + \frac{1}{k}$ the $0$-th trajectory does not yet reach height zero again. Consequently, at this time, its slope drops by $v_{2}(t_2+\frac1k - ) = k$, i.e.\ it becomes $-(2k-1)$.
\item In the case of the PIT, if $\lambda>0$ and the support of the measure $\gamma$
contains both $k$ and $k-1$ (i.e. $\mathbb{P}(A_i=k)$ and $\mathbb{P}(A_i=k-1)$ are both positive),
the event that the scenario described in the previous paragraph occurs with $(t_i,a_i)=(T_i,A_i)$ and $h_i=H_i$, $i=1,2,3$, has positive probability. Using a restart argument based on e.g.\ the renewal extinction times defined below in Section~\ref{sec-average}, it follows that for such $\gamma$, the $\mathrm{PIT}(\lambda,\gamma)$ exhibits infinitely many trajectories whose slope reaches $-(2k-1)$, almost surely. The same is clearly false e.g.\ if $k$ is even and the support of $\gamma$ is contained in $\{ 2,4,\ldots,k\}$.
\item While Lemma~\ref{lem-k} guarantees that trajectories can never have slopes less than or equal to $-2k$ when all initial slopes $a_i$ are in $(0,k]$, slopes in $(-(2k-1),-2k)$ can be obtained for certain choices of the possible values of $a_i$ (using an analogue of the construction in the first paragraph of the present remark). Moreover, since the slope decrement due to one kink can be arbitrarily small, having a finite lower bound on the minimal slope does not imply in general that the number of kinks of a given trajectory is bounded by a constant times the absolute value of the minimum possible slope. This is an obstacle to generalizing Corollary~\ref{cor-k} to this case.
\item
An analogue of Corollary~\ref{cor-k} holds whenever the possible initial slopes $a_i$ are in $\{ \alpha, 2 \alpha, \ldots, k\alpha \}$ for some $\alpha>0$.
\end{enumerate}
\end{remark}

\begin{figure}
\centering
\scalebox{0.85}{
\begin{tikzpicture}
  \draw[scale=3.7, black, ->] (0.9, 0) -- (3.95, 0) node[right] {time};
  \draw[scale=3.7, black, ->] (0.9, -0.0125) -- (0.9, 1.075) node[above] {height};
  \draw[scale=3.7,black,thick] (0.87,1) node[left] {$1$};
    \draw[scale=3.7,black,thick] (0.87,0) node[left] {$0$};
  \draw[scale=3.7, blue,thick] (1.65,0.17) node[above] {$v_j(t_j)=a_j \leq k$};
  \draw[scale=3.7, blue,thick] (1.65,0.02) node[above] {$h_j(t_j) =0$};
  \draw[scale=3.7, blue,thick] (1.75,0.4) node[above] {$v_j(t) \leq k$ on $(t_j,\tau)$};
  \draw[scale=3.7, blue,thick] (2.05,-0.00725) node[below] {$ t_j$};
  \draw[scale=3.7, red,thick] (2.58,-0.00725) node[below] {$ \tau$};
  \draw[scale=3.7, blue,thick] (2.7,0.95) node[below] {$h_j(\tau)=1$};
  \draw[scale=3.7, blue,thick] (2.7,0.8) node[below] {$v_j(\tau)=0$};
  \draw[scale=3.7, red,thick] (1.75,0.95) node[below] {$v_i(t_j) \leq -k$};
  \draw[scale=3.7, red,thick] (1.75,0.8) node[below] {$h_i(t_j) \leq 1$};
  \draw[scale=3.7, red,thick] (3.09,0.55) node[below] {$v_i \leq -k$ on $(t_j,\tau)$ $\Rightarrow$ $h_i(\tau)\leq 0$ {\normalcolor \Huge \Lightning}};
  \draw[scale=3.7, red,thick] (2.85,-0.05) node[below] {$v_i(\tau)\leq -2k$};
    \draw[scale=3.7, domain=2:2.5, smooth, variable=\x, traj1] plot ({\x}, {2*(\x-2)});
    \draw[scale=3.7, domain=2.5:3, smooth, variable=\x, traj1] plot ({\x}, {1});
  \draw[scale=3.7, domain=2:2.5, smooth, variable=\x, traj2] plot ({\x}, {1-2*(\x-2)});
   \draw[scale=3.7, domain=2.5:2.57, variable=\x, hyp2] plot ({\x}, {-4*(\x-2.5)});
\end{tikzpicture}}
\caption{Illustration of the proof of Lemma~\ref{lem-k}. Since $v_i(\tau) \leq -2 k$ and $v_j$ decreases from $a_j \leq k$ to $0$ between time $t_j$ and $\tau$, $v_i(t_j)$ cannot be larger than $-k$ (this holds thanks to~\eqref{lemma4.5}). But this implies that it takes $h_i$ at most as much time to get from height at most $1$ to height $0$ as it takes $h_j$ to get from height $0$ to $1$, which implies $h_i(\tau) \leq 0$, a contradiction from which it follows that $\tau$ must be infinite.}\label{fig-contradiction}
\end{figure}

\subsection{Proof of Proposition~\ref{lemma-average} and Corollary~\ref{cor-speedofkinking}}\label{sec-average}

\begin{proof}[Proof of Proposition~\ref{lemma-average}]
Recall the renewal extinction times $E_i$ from~\eqref{Ei}. Note that for $i \geq 1$, on the time interval
$(E_{i-1},E_i]$ only trajectories born in $(E_{i-1},E_i]$ can become resident,
and only the height functions of the same trajectories plus the one of the trajectory that was resident at time $E_{i-1}$ can suffer any kinks on this time interval due to resident changes.
Indeed, any other trajectory born on $[0,E_{i-1}]$ goes extinct by the time $E_{i-1}$. Hence, we have
\[ K_{E_i}-K_{E_{i-1}} \leq (N_{E_i}-N_{E_{i-1}})(N_{E_i}-N_{E_{i-1}}+1), \numberthis\label{KN2} \] and for $t \geq 0$, putting $\eta_t = |\{ i \geq 1 \colon E_i \leq t \}|=\sup \{ i \geq 0 \colon E_i \leq t \}$, we have $E_{\eta_t} \leq t < E_{\eta_t+1}$. Note also that $\eta_t \leq N_t$ for all $t$ because at each $E_i$ a trajectory born before $E_i$ becomes resident (and one trajectory can only become resident at most once).

It can be proven analogously to the proof of~\cite[Lemma 5.2]{HGSTW24} that $E_{i}-E_{i-1}$ is stochastically dominated by constant times a geometric random variable (see Appendix~\ref{sec-Nmoment} for more details). Moreover, the random variables $(E_i-E_{i-1})_{i\geq 1}$ are i.i.d.\ (in other words, $(E_n)_{n\in\mathbb{N}}$ is a renewal process). They have a finite first moment $m_1:=\E(E_1)$.
This implies that
\[ \begin{aligned} \E(K_{E_i}-K_{E_{i-1}}) & \leq \E((N_{E_i}-N_{E_{i-1}})(N_{E_i}-N_{E_{i-1}}+1)) = :\mathfrak m.
\end{aligned} \numberthis\label{KE}
\]
It is clear that $\mathfrak m>0$ because $N_{E_i}-N_{E_{i-1}} \geq 1$ for any $i$. The fact that $\mathfrak m<\infty$ is an immediate consequence of the following assertion.
\begin{lem}\label{lemma-NE1}
For any $\gamma$, it holds that $\E(N_{E_1}^2)<\infty$.
\end{lem}
The proof of this lemma follows using an argument analogous to~\cite[Section 5]{HGSTW24}. In Appendix~\ref{sec-Nmoment}, we will sketch this argument.
Given Lemma~\ref{lemma-NE1}, for $n\geq 1$ we have
\[ \E(K_{E_n}) = \sum_{i=1}^n \E(K_{E_i}-K_{E_{i-1}}) \leq n \mathfrak m. \numberthis\label{nrenewals} \]
The same arguments that lead to~\eqref{KN2} also yield the (much less tight) bound
\[K_t \leq N_t(N_t +1) \numberthis\label{maxnumberofkinks} \]
almost surely for all $t \geq 0$.
Since $(E_n)_{n\in\N}$ is a renewal process with increments having finite mean $m_1$, we have
\[ \lim_{n\to\infty} \frac{E_n}{n} = m_1 \]
almost surely thanks to the strong law of large numbers, and
\[ \lim_{t\to\infty} \frac{\eta_t}{t} = \frac{1}{m_1} \]
almost surely (cf.~\cite[Theorem 2.5.10]{EKM97} for the latter assertion). Using Hölder's inequality,~\eqref{nrenewals} and and~\eqref{maxnumberofkinks}, we conclude that for $\delta>0$, for all $t$ sufficiently large
\[
\begin{aligned}
    \E(K_{t}) & \leq  \E(K_{E_{\eta_t + 1}} \mathds 1_{\{\eta_t+1 \leq (\frac{1}{m_1} + \delta)t \}}) + \E(K_{t} \mathds 1_{\{\eta_t +1 > (\frac{1}{m_1} + \delta)t \}})
    \\ & \leq  \E(K_{E_{\lceil (\frac{1}{m_1}+\delta )t \rceil}}) + \sqrt{\E(K_t^2)}\sqrt{\E\Big( (\mathds 1_{\{\eta_t +1 > (\frac{1}{m_1} + \delta)t \}})^2\Big)}
    \\ & \leq (\frac{1}{m_1}+\delta + 1) t \mathfrak m +  \sqrt{\E(K_t^2)}\sqrt{\P\Big( \eta_t +1 > (\frac{1}{m_1} + \delta)t \Big)}
    \\ & \leq (\frac{1}{m_1}+\delta + 1) t \mathfrak m +  \sqrt{\underbrace{\E(N_t^4+2N_t^3+N_t^2)}_{\in O(t^4)}}\sqrt{\P\Big( \eta_t > (\frac{1}{m_1} + \delta)t - 1 \Big)}
    \\ & \leq (\frac{1}{m_1}+\delta + 1) t \mathfrak m + O(t^2) \sqrt{\P\Big( \eta_t > (\frac{1}{m_1} + \delta)t - 1 \Big)}.
\end{aligned}
\numberthis\label{deltasplit}
\]
Here, the assertion that $\E(N_t^4+2N_t^3+N_t^2) \in O(t^4)$ follows from the well-known fact that a Poisson distributed random variable $X$ with parameter $\lambda$ satisfies $\E(X(X-1)\ldots(X-k+1))=\lambda^k$ for any $k \geq 1$, from which one can easily derive that $\E(X^4) = (1+o(1)) \lambda^4$ as $\lambda \to \infty$.

To treat the last term on the right-hand side of~\eqref{deltasplit}, we use standard large-deviation estimates. First
note that there exists $\delta'>0$ not depending on $t$ such that
\[ \Big\{ \eta_t > \big( \frac{1}{m_1} + \delta \big) t -1 \Big\} = \Big\{ E_{\lceil (\frac{1}{m_1} + \delta)t \rceil - 1 } \leq t \Big\} \subseteq \Big\{ E_{\lceil (\frac{1}{m_1} + \delta)t \rceil - 1} \leq (1-\delta') m_1 \big( \lceil (\frac{1}{m_1} + \delta)t \rceil -1 \big) \Big\}. \]
Since $\Lambda(\alpha):=\log \E(\e^{\alpha E_1})<\infty$ for $\alpha>0$ small (which also follows from the stochastic domination by constant times a geometric random variable), Cramér's theorem~\cite[Section 2.2]{DZ98} implies that if $\delta'$ is sufficiently small, then
\[ \limsup_{n\to\infty} \frac{1}{n} \log \P \Big( E_{n} \leq (1-\delta') m_1 n \Big) \leq -\inf_{ t \leq (1-\delta')m_1} \Lambda^*(t) =: -I_0 <0, \numberthis\label{ratefunctionnegative} \]
where the rate function $\Lambda^* \colon \R \to [0,\infty]$ is given as
\[ \Lambda^*(a) = \sup_{\alpha \in \R} (\alpha a - \Lambda(\alpha)), \]
and together with the fact that $\Lambda'(0)=m_1$ and $\Lambda^*(m_1)=0$, the strict negativity in~\eqref{ratefunctionnegative} follows from the assertion of~\cite[Exercise 2.2.24]{DZ98} for $\delta'>0$ small enough.
We conclude that for all sufficiently large $t$,
\[ \sqrt{\P\Big( \eta_t > (\frac{1}{m_1} + \delta)t \Big)} \leq \e^{- \frac{1}{2} I_0 ( \lceil (\frac{1}{m_1} + \delta)t \rceil - 1)} \leq \e^{- \frac{1}{2} I_0 (\frac{1}{m_1} + \delta/2)t }. \]
Thus, the right-hand side of~\eqref{deltasplit} is bounded from above by
\[ (\frac{1}{m_1}+\delta + 1) t \mathfrak m + O(t^2) \e^{- \frac{1}{2} I_0 (\frac{1}{m_1} + \delta/2)t} \in O(t), \]
as wanted.
\end{proof}

\begin{proof}[Proof of Corollary~\ref{cor-speedofkinking}]

This proof is analogous to the one of~\cite[Theorem 2.8]{HGSTW24}.
We note that
\[ \widehat K_t = \sum_{i \geq 0 \colon E_i \leq t } K_{E_i}-K_{E_{i-1}} = \sum_{i \geq 0} K_{E_i} \mathds 1_{[E_{i},E_{i+1})}(t) \]
is a renewal reward process with renewal times $E_i$, i.e.\ it has i.i.d.\ increments at the renewal times $E_i$ that are independent of the $E_i$'s, and thus the corollary is a quick consequence of
the law of large numbers. For the reader's convenience we recall the argument.
For $t \geq 0$ let $m(t)$ be such that $E_{m(t)} \leq t < E_{m(t)+1}$. Then we have
\[ \frac{K_{E_{m(t)}}/m(t)}{E_{m(t)+1}/m(t)} \leq \frac{K_t}{t} \leq \frac{K_{E_{m(t)+1}}/m(t)}{E_{m(t)}/m(t)}, \numberthis\label{Kbounds} \]
analogously to~\cite[Equation (5.9)]{HGSTW24}.
Since
\begin{itemize}
    \item $m(t) \to \infty$ as $t \to \infty$,
    \item $E_m$ is a sum of $m$ i.i.d.\ copies of $E_1$ which have finite (and clearly strictly positive) expectation (cf.\ the proof of Proposition~\ref{lemma-average}),
    \item and $K_{E_m}=\widehat K_{E_m}$ is a sum of $m$ i.i.d.\ copies of $K_{E_1}$,
\end{itemize}
both the left- and the right-hand side of~\eqref{Kbounds} converge almost surely to $\E(K_{E_1})/\E(E_1)$. Hence, in order to finish the proof of the corollary, it suffices to show that $\E(K_{E_1}) \in (0,\infty)$. The fact that $\E(K_{E_1})<\infty$ follows from~\eqref{KE} for $i=1$. Finally, $\E(K_{E_1})$ is positive because $K_{E_1}$ is almost surely at least 2. Indeed, if the input of the PIT is nondegenerate, then between times $0$ and $E_1$, the slope of the initial resident trajectory must become negative, while the slope of the resident at time $E_1$ must become zero.
\end{proof}

\section{Discussion}\label{sec-discussion}

\subsection{Ancestral relations and fixations}\label{rem-fix}
According to~\cite[Section 6.1]{HGSTW24}, for $i \geq 0$ we say that $h_i$ \emph{fixes} (or \emph{reaches fixation}) if there exists $t_0>0$ such that for all $t > t_0$, the trajectory $h_i$ is an ancestor (i.e., parent, or parent of parent etc.) of all trajectories $h_j$ such that $h_j(t)>0$.
If the $i$-th trajectory is alone in the system with its parent, $h_i$ reaches height 1 at time $t_i + \frac1{a_i}$ and immediately fixes. But when other trajectories are present, due to clonal interference with them, this is not always the case: The slope of the $i$-th trajectory may change due to resident changes after time $t_i$.

Given our algorithm in the proof of Theorem~\ref{theorem-algorithm} for the residents, for finite $n$ we can now determine in linear time which trajectories fix. Indeed, the trajectories that fix are precisely the ultimate resident and its ancestors, and this ancestral line can be found in $O(n)$ time
using linear search. We can search for the parent of the ultimate resident starting from the end of the list in the output of the algorithm, and then we can search for the parent of this parent continuing backwards etc.

\subsection{Computing the height function}\label{rem-givent}

Recall that $t \mapsto \varrho^{(\beth)}(t)$ specifies the resident type at time $t \geq 0$, and its array representation can be computed in a runtime of $O(n)$ as in Section~\ref{sec-algorithm}. This algorithm also produces all trajectories in the continued lines representation, each given by a tuple $(t_i, p_i, b_i, y_i)$ with birth time $t_i$, parent index $p_i$, slope $b_i$, and initial height $y_i$. Using the output of this algorithm, the following computations can be made.

The $i$-th {half-line} in the continued lines representation is $\ell_i(t) = y_i + b_i(t - t_i)$ for $t \geq t_i$. From this, in the original system of interacting trajectories, we have $h_i(t) = \ell_i(t)- (\ell_{p_i}(t)-1)$ if this quantity is positive and $h_i(t)=0$ otherwise. This can be evaluated at any given $t \geq 0$ in $O(1)$.

Moreover, the entire piecewise linear trajectory $t \mapsto h_i(t)$ can be determined in $O(n)$ time as follows: at time $t_i$ it starts with slope $a_i$, and afterwards at each resident change time $s_{j}$, its right slope $v_i(\cdot )$ is reduced by the slope increment of the resident at this time. This way, $h_i(s_j)$ can be computed iteratively from the residency array for all $s_j > t_i$; when we first obtain a negative number, we check when precisely between $s_{j-1}$ and $s_j$ the trajectory $h_i$ hits zero, and from that time on we let $h_i$ be identically zero.

Finally, to compute $h_i(t)$ for all $i$, one may apply the previous procedure independently for each $i$, resulting in an $O(n^2)$ algorithm. If it is known in advance that each trajectory undergoes at most $O(k)$ kinks, a more efficient approach is to merge the residency array (keyed by residency start time) with the trajectory array (keyed by birth time) and, at each residency change, keep track of and update only the trajectories currently alive. As each trajectory experiences at most $O(k)$ kinks, this yields an algorithm of runtime $O(kn)$.

\subsection{Notes on the case of an unordered \texorpdfstring{$\beth$}{beth}}\label{sec-unordered}
In the case of an unordered $\beth$, the algorithm from Section~\ref{sec-algorithm} can be preceded by a sorting step, yielding an $O(n \log n)$ construction.

For a lower bound, consider an unordered list of pairwise distinct integers $\{x_1, \dots, x_n\}$. For $\beth = \{(x_1, 2), \dots, (x_n, 2)\}$, all trajectories become resident in increasing order of $x_i$. Thus, any algorithm that computes the array representation of the piecewise constant functions $t \mapsto \varrho^{(\beth)}(t)$ or $t \mapsto f(t)$ implicitly sorts the $x_i$, which in the comparison model requires $\Omega(n \log n)$ time.

Furthermore, the parent of each trajectory $(x_i, 2)$ is the one immediately preceding it in the sorted order of the $x_i$. Selecting the maximum $x_i$ and repeatedly taking the parent recovers the entire parent–child chain, yielding the sorted sequence in strictly decreasing order of the $x_i$. This means that computing the parent-child relationships between the trajectories is at least as hard as sorting. In the comparison model this establishes a lower bound of $\Omega(n \log n)$.

These lower bounds match the upper bound obtained by first sorting $\beth$ and then applying the algorithm from Section~\ref{sec-algorithm}.

\subsection{Extension of the algorithm
to interacting trajectories with jumps}\label{sec-b} Consider a variant of the system of interacting trajectories where trajectories start from a fixed height $b \in (0,1)$ instead of 0 upon birth and jump to zero after reaching height $b$ from above again (and otherwise they follow the dynamics described in Section~\ref{sec-model}). At the end of Section~\ref{rem-modsel}, we will put this variant into population-genetic context.

In the continued lines representation of this variant of the system of interacting trajectories, {half-lines} are born $1-b$ (instead of $1$) below the current height of the resident {half-line}, and reaching the current resident height minus ($1-b$) again corresponds to becoming 0 again.
It is clear that our algorithm for determining the resident fitness remains valid for this variant of the continued lines representation, still with linear runtime and with the height of newborn trajectories being computed accordingly.

\subsection{Background from population genetics}\label{rem-modsel}

It was shown in~\cite{HGSTW24} that the system of Poissonian interacting trajectories arises as a scaling limit of a Moran model with mutation and selection. This model is a continuous-time Markov chain with fixed total population size $N$. Assume that at time 0, all $N$ individuals belong to the initial resident population and have the same fitness value 0. Beneficial mutations occur in a randomly chosen individual at time $T_i \log N$ (with $T_i$ corresponding to the PIT), with a positive fitness advantage equal to $A_i$, for all $1 \leq i < n+1$. Since $A_i$ does not depend on $N$, one speaks about \emph{strong selection}.
Mutant subpopulations tend to either go extinct rapidly or they grow approximately exponentially. In the latter case, the base-$N$ logarithm of their population size grows approximately linearly and takes $\Theta(\log N)$ time to reach the vicinity of 1.

The per generation mutation rate is $\Theta(1/\log N)$, which corresponds precisely to the so-called Gerrish--Lenski mutation regime.
Speeding up time by a factor of $\log N$, we obtain approximately the system $\mathbb{H}(\beth)$; the corresponding convergence result is~\cite[Theorem 2.7]{HGSTW24}. Here, in the limit, the height $h_i(t)$ of a given trajectory at a given time $t$ represents the $N$-base logarithm of the size of a subpopulation of genetically identical individuals in the Moran model.

Consider now a variant of the Moran model where for some $b \in (0,1)$ fixed, the random fitness advantages $A_i$ are multiplied by the vanishing factor $N^{-b}$ depending on the total population size $N$. This is called the case of \emph{moderate selection}, where mutants typically survive initial fluctuations with probability of order $N^{-b}$, cf.\ \cite{BGPW21i,BGPW21ii}. It was conjectured in~\cite[Section 7.2]{HGSTW24} that if one still chooses the per generation mutation rate as $\Theta(1/\log N)$ and speeds up time by a factor of $N^b \log N$, the system of rescaled logarithmic frequencies of mutants surviving the initial fluctuations converges to the variant of the system of (Poissonian) interacting trajectories with $b$-jumps that was sketched in Section~\ref{sec-b}.

\appendix

\section{Appendix: sketch of proof of Lemma~\ref{lemma-NE1}}\label{sec-Nmoment}
First, we already anticipated in Section~\ref{sec-k} that $E_1$ is stochastically dominated by constant times a geometrically distributed random variable. Let us now provide some details of the proof of this assertion, which are necessary in order to explain why $N_{E_1}$ has a finite second moment. We claim that if $i \in \mathbb{N}$ is such that for all $i' \neq i$ we have $T_{i'} \notin [T_i - \frac{2}{A_i}, T_i + \frac{4}{A_i}]$, then there is a renewal extinction time in $[T_i - \frac{2}{A_i}, T_i + \frac{4}{A_i}]$. If $i$ has this property, then we will call $i$ \emph{favourable}.

Indeed, it was shown in~\cite[Section 5.2]{HGSTW24} that if $i\in \mathbb{N}$ is such that for all $\mathbb{N}_0 \ni i' \neq i$ we have $T_{i'} \notin [T_i - \frac{2}{A_i}, T_i + \frac{2}{A_i}]$, then the $i$-th trajectory $H_i$ becomes resident (i.e.\ it reaches height $1$) at some time $R$ that is not larger than $T_i + \frac{2}{A_i}$, and this resident change is solitary (cf.\ Remark~\ref{rem-analogy}), i.e.\ all trajectories but the $i$-th one have a nonpositive slope right after this resident change. Moreover, it was also shown there that $V_i(R-)=H_i'(R-) \geq A_i/2$ and that $V_{i'}(T_i) \leq A_i/2$ for any $i'<i$. (In words, the ultimate slope of the $i$-th trajectory before this resident change is at least $A_i/2$, and any trajectory $H_{i'}$ born before time $T_i$ has slope at most $A_i/2$ at time $T_i$.) Hence, it follows from~\eqref{lemma4.5} that
\[ V_i(R-) - V_{i'}(R-) = V_i(T_i)-V_{i'}(T_i) \geq A_i - \frac{A_i}{2} = \frac{A_i}{2}. \]
Now, let us note that according to Definition~\ref{defdyn} this implies that for all such $i'$ we have
\[ V_{i'}(R) \leq V_i(R)-\frac{A_i}{2} = 0 -\frac{A_i}{2} = -\frac{A_i}{2}. \]
There exists at least one such index $i'$, namely the previous resident before the $i$-th trajectory. Moreover, since the slope of any trajectory of the PIT is nonincreasing between the birth time and the extinction time of the trajectory, it follows that these trajectories hit zero at time not later than $T_i + \frac{4}{A_i}$. We conclude that if there is no trajectory born in $[T_i + \frac{2}{A_i}, T_i + \frac{4}{A_i}]$ either, then all of these trajectories hit zero by time $T_i + \frac{4}{A_i}$ and the time when the last one hits zero is a renewal extinction time. This implies the claim.

Next, choose $a_0>0$ such that $q:=\gamma([a_0,\infty))=\mathbb{P}(A_1 \geq a_0)>0$. For $\ell \in \mathbb{N}$, let us call $\ell$ \emph{good} if there exists precisely one $i \in \mathbb{N}$ such that $T_i \in [\frac{7(\ell-1)+2}{a_0},\frac{7(\ell-1)+3}{a_0}]$, it satisfies $A_i \geq a_0$, and there exists no $i' \in \mathbb{N} \setminus \{ i \}$ such that $T_{i'} \in [\frac{7(\ell-1)}{a_0},\frac{7\ell}{a_0}]$, and let us call $\ell$ \emph{bad} otherwise. Note that if $\ell$ is good, then the corresponding $i$ is favourable. For fixed $\ell$, $\ell$ being good is equivalent to a Poisson process of intensity $\lambda$ having exactly one arrival in a time interval of length $1/a_0$ and no arrival in two disjoint intervals of length $2/a_0$ resp.\ $4/a_0$ that are also disjoint from the first interval. Hence, any fixed $\ell$ is good with probability $\frac{\lambda q}{a_0} e^{-7\lambda /a_0}$, and the events $\{ \ell \text{ is good} \}$, $\ell \in \N$, are independent.
Therefore, if we put
\[ S: = \inf \{ \ell \in \mathbb{N} \colon \ell \text{ is good} \}, \]
then $S$ is a geometric random variable with parameter (success probability) $\frac{\lambda q}{a_0} e^{-7\lambda /a_0}$. In particular, since $E_1 \leq \frac{7S}{a_0}$ almost surely, it follows that $\E(e^{\alpha E_1})<\infty$ for all $\alpha>0$ sufficiently small, as we claimed in Section~\ref{sec-k}.

In order to conclude that $\E(N_{E_1}^2)<\infty$, it now suffices to prove that $\E(N_{ \frac{7S}{a_0}}^2)<\infty$. Conditional on $S$, in the time interval $[\frac{7(S-1)}{a_0}, \frac{7S}{a_0}]$ precisely one trajectory is born, and thus $N_{ \frac{7S}{a_0}}$ is equal to one plus the sum of $S-1$ independent random variables $X_1,\ldots,X_{S-1}$, with $X_\ell$ describing the number of trajectories born in the respective bad time interval $[\frac{7(\ell-1)}{a_0}, \frac{7\ell}{a_0}]$. Here, using inclusion--exclusion for the reasons of badness of the individual time intervals, $X_\ell$ is at most one plus the sum of three independent random variables:
\begin{enumerate}[(i)]
\item a Poisson$(2\lambda/a_0)$-distributed one conditioned to be positive (corresponding to the number of trajectories born in $[\frac{7(\ell-1)}{a_0}, \frac{7(\ell-1)+2}{a_0}]$),
\item the sum of a Poisson$(\lambda q/a_0)$-distributed one conditioned to be unequal to $1$ (describing the number of trajectories of initial slope at least $a_0$ born in $[\frac{7(\ell-1)+2}{a_0}, \frac{7(\ell-1)+3}{a_0}]$) and an independent Poisson$(\lambda (1-q)/a_0)$ -distributed one conditioned to be positive (describing the number of trajectories of initial slope less than $a_0$ born in the same time interval),
\item and a Poisson$(4\lambda/a_0)$-distributed one also conditioned to be positive (and equal to the number of trajectories born in $[\frac{7(\ell-1)+3}{a_0}, \frac{7\ell}{a_0}]$),
\end{enumerate}
where we note that the independence of the number of trajectories with slope less than $a_0$ and the ones with slope at least $a_0$ born within a given time interval follows from the Colouring Theorem~\cite[p.~53]{K93}.
The proof of the fact that the sum $\sum_{\ell=1}^{S-1} X_\ell +1$ has a finite second moment follows from elementary properties of the Poisson distribution and can be carried out analogously to~\cite[Section 5.5, part 1.]{HGSTW24}, therefore we omit the details. The lemma follows.

\subsection*{Acknowledgements}
The authors thank Felix Hermann for interesting discussions and comments, and for providing the R code using which Figure~\ref{fig-LB} was created. He gave the first construction proving Proposition~\ref{prop-cousinen}, which was somewhat different from the one appearing in our proof. In the context of Remark~\ref{rem-k}, he was the first to observe that a slope of $-(2k-1)$ occurs in the case $k=2$. The authors also thank Mátyás Zelei for creating Figure~\ref{fig-contlines}, Anton Wakolbinger for historical remarks and comments, and two anonymous reviewers for insightful suggestions and comments, in particular regarding the presentation of the continued lines representation.

\end{document}